\def\mynewtheorem#1[#2]#3{%
  \newaliascnt{#1}{#2}%
  \newtheorem{#1}[#1]{#3}%
  \aliascntresetthe{#1}%
  \expandafter\def\csname #1autorefname\endcsname{#3}%
}
\newtheorem{theorem}{Theorem}
\theoremstyle{definition}
\newcolumntype{b}{@{}>{{}}} 
\newcolumntype{B}{@{}>{{}}c<{{}}@{}}
\newcolumntype{h}[1]{@{\hspace{#1}}}
\newcolumntype{L}{>{$}l<{$}}
\newcolumntype{C}{>{$}c<{$}}
\newcolumntype{R}{>{$}r<{$}}
\newcolumntype{S}{>{$(}r<{)$}}
\newcolumntype{n}{@{}}
\newenvironment{enumcases}
  {\begin{enumerate}[\bfseries{Case} 1.]}
  {\end{enumerate}}
\newcommand{\bag}[1]{[#1]}
\newcommand{\setof}[1]{\{\,#1\,\}}
\newcommand{\bang}[1]{{#1}^{\oc}}
\newcommand{\argu}[1]{{#1}^{(\oc)}}
\newcommand{\app}[2]{#1\!\cdot\! #2}
\newcommand{\linear}[2]{\langle#1/#2\rangle}  
\newcommand{\expo}[2]{\left\{#1/#2\right\}}  
\newcommand{\gen}[2]{\langle\!\langle#1/#2\rangle\!\rangle}  
\newcommand{\sdot}{\!\cdot\!} 
\newcommand\Lamr{\ensuremath{\Lambda^{r}}} 
\newcommand\Expr{\Lambda^{(b)}} 
\newcommand{\Nat}{\mathtt{Nat}} 
\newcommand{\Sum}[1]{\mathbb{#1}} 
\newcommand{\dpap}{\llbracket}
\newcommand{\dpch}{\rrbracket}
\newcommand{\hole}[1]{\dpap #1\dpch} 
\def\maysol{may-solvable}
\DeclareMathOperator{\hnf}{onf} 
\DeclareMathOperator\mhnf{monf} 
\newcommand*{\mystackrel}[2]{\stackrel{\raise-2pt\hbox{$\scriptstyle\!#1$}}{#2}}
\newcommand*{\mystackrelrev}[2]{\stackrel{\raise-2pt\hbox{$\scriptstyle\,#1$}}{#2}}
\newcommand*{\mystackrelequi}[2]{\stackrel{\raise-2pt\hbox{$\scriptstyle#1$}}{#2}}
\newcommand*{\red}[1]{\mathtt{#1}}  
\newcommand*{\redto}[1][]{\mystackrel{\red{#1}}{\rightarrow}}
\newcommand*{\xredto}[1][]{\xrightarrow{\raisebox{-2pt}{$\scriptstyle\!\red{#1}$}}}
\newcommand{\outbeta}[1][]{\xredto[\mathtt{o} #1]} 
\newcommand{\nonoutbeta}[1][]{\xredto[\mathtt{i} #1]} 
\renewcommand{\L}{\mathcal L}
\newcommand{\bnf}{\mathrel{::=}}
\newcommand{\ass}{\mathrel{:=}}
\newcommand{\textdef}[1]{\textbf{#1}}
\newcommand{\lam}{\ensuremath{\lambda}}
\newcommand{\NDMachine}{\ensuremath{\Downarrow_{nd}}} 
\newcommand{\DNDMachine}{\ensuremath{\Uparrow_{nd}}} 
\newcommand{\BagMachine}{\ensuremath{\Downarrow_b}}
\title{Standardization in resource lambda-calculus}
\author{Maurizio Dominici \;\;\; Simona Ronchi Della Rocca
\institute{Dipartimento di Informatica -- Universit\`a di Torino }
\email{dominicimaurizio@gmail.com \;\;\; ronchi@di.unito.it }\\
Paolo Tranquilli
\institute{Dipartimento di Scienze dell'Informazione -- Universit\`a di Bologna }
\email{tranquil@cs.unibo.it}
}
\begin{document}
\maketitle
\begin{abstract}
The resource calculus is an extension of the $\lambda$-calculus allowing to model resource consumption.
It is intrinsically non-deterministic and has two general notions of reduction -- one parallel, preserving all the possible
results as a formal sum, and one non-deterministic, performing an exclusive choice at every step.
We prove that the non-deterministic reduction enjoys a notion of standardization, which is the natural extension
with respect to the similar one in classical $\lambda$-calculus. The full parallel reduction only enjoys a weaker notion
of standardization instead.
The result allows an operational characterization of may-solvability, which has been introduced and already characterized (from the
syntactical and logical points of view) by Pagani and 
Ronchi Della Rocca.
\end{abstract}
\section{Introduction}
The resource calculus ($\Lamr$) is an extension of the $\lambda$-calculus allowing to model resource consumption. Namely, 
the argument of a function comes as a finite multiset of resources, which in turn can be either linear or reusable. A linear
resource must be used exactly once, while a reusable one can be called \emph{ad libitum}. In this setting the 
evaluation of a function applied to a multiset of resources gives rise to different possible choices, because of the different
possibilities of distributing the resources among the occurrences of the formal parameter. 
We can define two kinds of reduction, according to the interpretation we want to give to this fact. 
The parallel reduction (which can be further divided in giant and baby) performs all the possible choices, and gives as result a formal sum preserving all the possible results, while
the non-deterministic reduction at every step chooses non-deterministically one of the possible results. 
In case of a multiset of linear resources, also a notion of \emph{crash} arises, whenever the cardinality 
of the multiset does not fit exactly the number of occurrences. 
Then the resource calculus is a useful framework for studying the notions of linearity and non-determinism, and the 
relation between them.
$\Lamr$ is a descendant of the calculus of multiplicities, introduced by Boudol in \cite{boudrescalc}, and it has been designed 
by
Tranquilli \cite{intudiffnet} in order to give a precise syntax for the  differential $\lambda$-calculus of Ehrhard and Regnier 
\cite{lambdadiff}.
$\Lambda^{r}$ can be used as a paradigmatic language for different kinds of computation.
Usual $\lambda$-calculus can be embedded in it. Forbidding linear terms but allowing non-empty finite multisets of reusable 
terms 
yields a purely 
non-deterministic extension of $\lambda$-calculus, which recalls the one of De Liguoro and Piperno 
\cite{deLiguoroP95}. Allowing only multisets of linear terms gives the linear fragment of $\Lamr$, used by Ehrhard and Regnier 
to give a quantitative account to $\lambda$-calculus $\beta$-reduction through Taylor expansion~\cite{bohmtaylor,difftaylor}.


But to be effectively used, $\Lamr$ needs a clear operational semantics. 
In this paper we investigate the notion of standardization in it. Let us recall that a calculus has the 
standardization property when
every reduction sequence can be rearranged according to a predefined order between redexes. 
Namely a reduction is standard with respect to a given order if
at every reduction step the reduced redex is not a residual of a redex which, in the given order, precedes a previously reduced one.
In the case of $\lambda$-calculus, the standardization is based on the left-to-right order of redexes.

In $\Lamr$, as the elements of a multiset are not ordered, a notion of standardization would be based on a partial order between
redexes. A first result, corresponding to a weak notion of standardization, has been proved by Pagani and Tranquilli \cite{PaganiTranquilli09}, stating that the reductions of redexes inside reusable resources
can always be postponed. 
We define a stronger partial order between redexes, and we prove that the non-deterministic reduction enjoys the standardization property
with respect to it. Even though this order is not total, it is in fact
undefined if and only if the two redexes live in different elements of
a same multiset, so that any finer order would not be well-defined.
This result allows us to complete the characterization of may-solvability, defined in  \cite{paganironchi10}.
Let us stress that solvability is
a key notion for evaluation, since it identifies the meaningful programs, and a clear notion of output result of a computation. 
Since this calculus is non-deterministic, 
two different notions of solvability arise, one optimistic (angelical, may) and 
one pessimistic (demoniac, must). In particular, in \cite{paganironchi10,paganironchiFI10} a characterization of the may-solvability
has been given, from a syntactical and logical point of view. Here we provide an operational characterization, through
an abstract reduction machine, performing the non-deterministic reduction. The soundness and completeness of the machine
with respect to the notion of may-solvability comes from the standardization property.
 
Moreover we prove that the parallel reduction does not enjoy the same standardization property.
Namely we show that in this case any order between linear redexes cannot be sound.
This negative result is interesting, since it gives evidence to the deep difference between linear and non-deterministic reduction.

%
%


\section{Syntax}
\begin{figure}[t]
  \footnotesize
  \centering
\begin{tabular}{ch{15pt}|h{15pt}c}
\subfigure[Grammar of terms, bags, sums, expressions.]{\label{fig:grammar}%
    \begin{tabular}{nL<{{}}LbLrn}
    \Lamr:       &M,N, L,O         &\bnf x \mid \lambda x.M \mid MP        & terms\\[5pt]
    \Lambda^{(\oc)}: &\argu M,\argu N&\bnf M \mid \bang{M}                     & resources\\[5pt]
    \Lambda^b:     &P, Q, R        &\bnf 1\mid\app{\bag{\argu M}}P
    & bags\\[5pt]
      \Expr:         &A,B            &\bnf M \mid P                            & expressions\\[5pt]
    \Nat\langle\Lamr\rangle:         &\Sum M, \Sum N, \Sum L            &\bnf 0 \mid M\mid \Sum M + \Sum N                            & sums of terms\\[5pt]
    \Nat\langle\Lambda^b\rangle:         &\Sum P, \Sum Q, \Sum R            &\bnf 0 \mid P\mid \Sum P+\Sum Q                            &sums of bags\\[5pt] 
     \multicolumn{3}{nLn}{\Sum A, \Sum B\in\Nat\langle\Expr\rangle\ass\Nat\langle\Lamr\rangle\cup\Nat\langle\Lambda^b\rangle} & sums of expressions\\[5pt]
   \end{tabular}%
}
&
\subfigure[Notation on $\Nat\langle\Expr\rangle$.]{
\label{fig:notsum}
$
\begin{aligned}
\lambda x.(\sum_i M_i)&\ass\sum_i\lambda x. M_i\\
(\sum_iM_i)(\sum_jP_j)&\ass\sum_{ij}M_iP_{j}\\
\bag{(\sum_iM_i)}\sdot (\sum_jP_j)&\ass \sum_{ij}\bag{M_i}\sdot P_{j}\\
\bag{(\sum^{k}_iM_i\bang)}\sdot (\sum_jP_j)&\ass \sum_{j}\bag{\bang M_1,\dots,\bang M_k}\sdot P_{j}
\end{aligned}\textstyle
$
}
\end{tabular}
\caption{\footnotesize Syntax of the resource calculus.}
\label{fig:statics}
\end{figure}
\paragraph{The syntax of $\Lamr$.} Basically, we have three syntactical sorts: terms, that are in functional position, bags, 
that are in argument position and represent multisets of resources, and finite formal sums, that 
represent the possible results of a computation. Precisely, Figure~\ref{fig:grammar} gives 
the grammar for generating the set $\Lamr$ of \textdef{terms} and the set $\Lambda^b$ of \textdef{bags} 
(which are in fact finite multisets of \textdef{resources} $\Lambda^{(\oc)}$) together with their typical metavariables. 
A resource can be linear (it must be used exactly once) or not (it can be used ad libitum, also zero times), in the last case it is written with a 
$\oc$ superscript. 
Bags are multisets presented in multiplicative notation, so that $\app PQ$ is the multiset union, and $1=\bag\,$ is the empty bag: 
that means, $\app P1=P$ and $\app PQ=\app QP$.
It must be noted though that we will never omit the dot $\cdot$, to avoid confusion with application.
\textdef{Sums} are multisets in additive notation, with $0$ referring to the empty multiset, so that:
$\Sum M+0=\Sum M$ and $\Sum M+\Sum N=\Sum N+\Sum M$.
We use two different notations for multisets in order to underline the different role of bags and sums. 

An \textdef{expression} (whose set is denoted by $\Expr$) is either a term or a bag. Though in practice only sums of 
terms are needed, for the sake of the proofs we also introduce sums of bags and of expressions. 
The symbol $\Nat$ denotes the set of natural numbers, and
$\Nat\langle\Lamr\rangle$ (resp.\ $\Nat\langle\Lambda^{b}\rangle$) denotes the set of finite formal sums of terms 
(resp.\ bags).

The grammar for terms and bags does not include sums in any point, so that
in a sense they may arise only as a top level constructor. However, as an inductive
notation (and \emph{not} in the actual syntax) we extend all the constructors to
sums as shown in Figure~\ref{fig:notsum}. In fact, all constructors but the
$\bang{(\cdot)}$ are, as expected, linear in the algebraic sense, \emph{\emph{i.e.}}\ they commute with sums.
In particular, we have that $0$ is always absorbing but for the $\bang{(\cdot)}$ constructor, 
in which case we have $\bag{0^\oc}=1$. 
We refer to~\cite{intudiffnet,phdtranquilli} for the mathematical intuitions underlying the resource calculus.

We adopt $\alpha$-equivalence and all the usual $\lambda$-calculus conventions as per \cite{barendregt}. 



The pair reusable/linear has a counterpart in the following two different notions of substitutions: 
their definition, hence that of reduction, heavily uses the notation of Figure~\ref{fig:notsum}. 
%
\begin{figure}[t]
\footnotesize
\centering
$
  y\linear{N}{x}\ass 
  \begin{cases}
    N&\textrm{if $y=x$,}\\[5pt]
    0&\textrm{otherwise,}
  \end{cases}
\quad
\begin{aligned}
  (\lambda y. M)\linear{N}{x}&\ass \lambda y.M\linear{N}{x},
\\[2pt]
  (MP)\linear{N}{x}&\ass M\linear{N}{x} P + MP\linear{N}{x},
\end{aligned}
\quad
\begin{aligned}
  1\linear{N}{x}&\ass 0,
\\[2pt]
  (\bag{M}\cdot P)\linear{N}{x}&\ass\bag{M\linear{N}{x}}\cdot
  P+\bag{M}\cdot P\linear Nx,
\\[2pt]
  (\bag{M^\oc}\cdot P)\linear{N}{x}&\ass\bag{M\linear{N}{x},M^\oc}\cdot
  P+\bag{M^\oc}\cdot P\linear Nx.
\end{aligned}
$%
\caption{\footnotesize Linear substitution. In the abstraction case we suppose $y\notin\mathrm{FV}(N)\cup\{x\}$.}\label{fig:linsub}
\end{figure}
\begin{definition}[Substitutions]\label{def:subst}
We define the following substitution operations.
\begin{enumerate}[(i)]
\item $A\expo Nx$ is the usual $\lambda$-calculus (\emph{i.e.}\ capture free) substitution of $N$ for $x$. It is extended to sums as in $\Sum A\expo {\Sum N} x$ by linearity in $\Sum A$.
The form $A\expo{x+N}x$ is called \textdef{partial substitution}.

\item $A\linear Nx$ is the \textdef{linear substitution} defined inductively in Figure~\ref{fig:linsub}.
It is extended to $\Sum A\linear{\Sum N} x$ by bilinearity in both $\Sum A$ and $\Sum N$.

\item $A\gen{N^{(\oc)}}{x}$, defined by
$A\gen{N}{x}\ass A\linear Nx$ and $A\gen{\bang N}{x}\ass A\expo {N+x}x$, is
the \textdef{resource substitution}, and moreover
$A\gen{B}{x}$, defined by
$A\gen{[N_1^{(!)},\dots,N_n^{(!)}]}x=A\gen{N_1^{(!)}}{x}\cdots\gen{N_n^{(!)}}{x}$ (assuming $x \notin FV(B)$)
is  the \textdef{bag substitution}.
\end{enumerate}
\end{definition}

Roughly speaking, the linear substitution corresponds to the replacement of the resource to exactly one \emph{linear} 
occurrence of the variable. In the presence of multiple occurrences, all the possible choices are made, and the result
is the sum of them. For example 
$(y\bag{x}\bag{x})\linear{N}x = y\bag{N}\bag{x}+y\bag{x}\bag{N}$. In the case there 
are no free linear occurrences, then linear substitution returns $0$, morally an error message. 
For example $(\lambda y.y)\linear Nx=\lambda y.(y\linear Nx)=\lambda y.0=0$. Finally, in case of 
reusable occurrences of the variable, linear substitution acts on a linear copy of the variable, 
\emph{e.g.}\  $\bag{\bang x}\linear Nx=\bag{N,\bang x}$. 
\paragraph{The reductions of $\Lamr$.} 
A term context $C\hole{\cdot}$ (or a bag context $P\hole\cdot$) is defined by
extending the syntax of terms and bags by a distinguished free variable called 
\textdef{hole} and denoted by $\hole\cdot$.

Notice that in contexts the order of holes cannot be truly
established as bags are independent of order. So filling\footnote{We recall that hole substitution allows for variable capture.} the $k$ holes of a contexts by terms needs 
a bijective mapping $a$ from
$\setof{1,\dots,k}$ to hole occurrences in $C\hole\cdot$, and $C_{a}\hole{\vec M_i}$
denotes the replacement of the holes by $M_{1},...,M_{k}$ guided by this map. We can write
also $C\hole{\vec M_i}$, by considering an implicit map.

A (term, bag) context is \textdef{simple} if it contains exactly one occurrence of the hole. 
In this case we will write simply $C\hole{M}$ for the result of filling
of the hole with $M$.
%
%
A simple context is \textdef{linear} if its hole is not under the scope of a $\bang{(\,)}$ operator, and it is
\textdef{applicative} if it has the hole not in a bag.
As usual the (simple/applicative/linear) context closure of a relation $R$ is
the one relating $C\hole t$ and $C\hole {t'}$ when $t\mathrel R t'$ and $C$ is
of the appropriate kind.

We define 
two kinds of reduction rule, called parallel and non-deterministic. Moreover the parallel
reduction can be further divided into
baby-step and giant-step, 
the former being a decomposition of the latter. Baby-step is more atomic, performing one substitution 
at a time, while the giant-step is closer to $\lambda$-calculus $\beta$-reduction, wholly consuming its redex in one shot. 
\begin{definition}[\cite{intudiffnet,phdtranquilli}]\label{def:giantbaby}
\begin{itemize}
\item[(i)] The \textdef{parallel} reductions are defined as follows:
\begin{itemize}
\item The \textdef{baby-step} reduction $\redto[b]$ is defined by the simple context closure of the following relation (assuming $x$ 
not free in $N$):
\begin{gather*}
(\lambda x. M)1\redto[b]M\expo{0}{x}\quad(\lambda x. M)\app{\bag{N}}{P}\redto[b]
(\lambda x. M\linear{N}{x})P\\
(\lambda x. M)\app{\bag{\bang{N}}}{P}\redto[b](\lambda x. M\expo{N+x}{x})P
\end{gather*}
\item The \textdef{giant-step} reduction $\redto[g]$ is defined by the simple context closure of the following relation: 
$$ (\lambda x.M)P \redto[g] M\gen{P}{x}\expo{0}{x}$$
\end{itemize}
\item[(ii)] The \textdef{non-deterministic} reduction is the relation
$M \redto[nd] N$ if and only if $M \redto[g] N + \mathbb{A}$ for some $\mathbb{A}$.

  \end{itemize}
\end{definition}

\begin{notation} For any reduction
  $\redto[\epsilon]$ (the ones listed above and the ones to come), we denote by
  $\redto[\epsilon*]$ its reflexive-transitive closure.
  $\rho: M \redto[\epsilon *] N$ denotes a particular reduction sequence from $M$ to $N$,
  and $|\rho|$ its length.
  \end{notation}
%
\paragraph{$\Lamr$ and $\lambda$-calculus.}
In $\lambda$-calculus, arguments can be used as many times we want, so it is easy to inject it in $\Lamr$ through the
following translation $(.)^{*}$:
$$
(x)^{*}=x, \;\ (\lambda x.M)^{*}= \lambda x. (M)^{*}, \;\ (MN)^{*}= (M)^{*}\bag{\bang{(N)^{*}}} 
$$
On terms of $\Lamr$ which are translations of $\lambda$-terms, the giant reduction becomes the usual $\beta$-reduction.


\section{Standardization}
In this section we will prove that the non-deterministic reduction enjoys a standardization property.
As we recalled already in the introduction, the standardization property is based on an order relation between redexes.
We can define it formally as follows:
\begin{definition}
Let $\prec$ be an order on positions in terms (which is extended to an order on
subterms of a given term). Suppose $\rho$ is a reduction chain, and let
$M_i$ and $R_i$ be the $i$-th term and fired redex in $\rho$ respectively.
We say that $\rho$ is \emph{$\prec$-standard} if for every $i$ we have that
$R_{i+1}$ is not the residual of a redex $R'$ in $M_i$ such that $R'\prec R_i$.
\end{definition}

We will prove that non deterministic reduction in $\Lamr$ enjoys the standardization property
with respect to the order $\prec_r$, which is the partial order on positions in $\Lambda^r$ terms that, intuitively,
gives precedence to linear positions over non-linear ones, and then orders
linear positions left-to-right, with the proviso that positions inside the same bag
be not comparable. The formal definition follows.

\begin{definition}[Linear left-to-right order]\label{def:order}
For two subterms $S_1$ and $S_2$ inside the expression $\Sum A$, we say that
$S_1\prec_r S_2$ in $\Sum A$ if and only if any of the following happens:
\begin{itemize}
 \item $S_2$ is a subterm of $S_1$;
 \item $S_1$ is linear in $\Sum A$ while $S_2$ is not;
 \item $S_1$ and $S_2$ are both linear in $\Sum A$, $\Sum A=MP$, $S_1$ is in $M$ and $S_2$ is in $P$.
 \item $S_1$ and $S_2$ are subterms of the same proper subexpression $\Sum B$ of $\Sum A$, and
 $S_1\prec_r S_2$ in $\Sum B$;
\end{itemize}
\end{definition}
\begin{example}
 $S_1\prec_r S_2$ in both $\lambda x.x\bag{\bang S_2}\bag{S_1}$ and
 $\lambda x.x\bag{S_1}\bag{S_2}$,
 while they are incomparable in $\lambda x.x\bag{S_1,S_2}$.
\end{example}

Our starting point is the division of redexes in two classes, outer and inner.

\begin{definition}[\cite{PaganiTranquilli09}]\label{def:outerreduction}
Let $\epsilon\in\{\red{b}, \red{g}, \red{nd}\}$. The \textdef{outer $\epsilon$-reduction} $\outbeta[\epsilon]$ is the 
 \emph{linear} context closure of the $\epsilon$-steps given in
Definitions~\ref{def:giantbaby}. 
A \textdef{non-outer $\epsilon$-reduction}, called \textdef{inner} is
defnoted by $\nonoutbeta[\epsilon]$.
\end{definition}
In other words, an outer reduction does not reduce inside reusable resources, so
an outer redex (\emph{i.e.}\ a redex for $\outbeta[\epsilon]$) is a redex not under the scope of a 
$\bang{(\cdot)}$ constructor. In particular a term corresponding to a 
$\lambda$-term has at most one outer-redex, which coincides with the head-redex.
Pagani and Tranquilli stated in some sense a weak standardization property for the giant reduction,
proving that inner redexes can always be postponed. Their result
can easily be extended to other reductions, in particular to the non-deterministic one.

\begin{theorem}[\cite{PaganiTranquilli09}]\label{the:patra}
Let $\epsilon\in\{\red{b}, \red{g}, \red{nd}\}$. $M \redto[\epsilon^*]\Sum A $ implies $M \redto[o\epsilon^{*}]\Sum A' $ and $\Sum A' \redto[i\epsilon^{*}]\Sum A$.
\end{theorem}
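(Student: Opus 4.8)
The plan is the classical recipe for postponement of internal reductions: prove a local \emph{swap lemma} -- an inner step immediately followed by an outer step can be reorganised into outer steps followed by inner steps -- and then iterate it along a whole reduction. Throughout I would use that a single step on a sum fires exactly one redex inside one summand, and that being inner or outer is a property of the fired redex alone, so $\redto[og]$, $\redto[ig]$ and the whole argument lift to $\Nat\langle\Expr\rangle$ without surprises. It also suffices to treat $\epsilon=\red g$: a baby redex is inner precisely when it sits under a $\bang{(\cdot)}$, so the argument is literally the same for $\red b$, and a single $\redto[nd]$-step is a $\redto[g]$-step followed by the choice of a summand, which does not change the inner/outer classification of the fired redex, so the $\red{nd}$-statement follows from the $\red g$-one.

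For the swap lemma I would assume $\Sum A\redto[ig]\Sum B\redto[og]\Sum C$, the inner step firing $r$ (living under some $\bang{(\cdot)}$) and the outer step firing $\mathbf R$. The first observation is that an inner step creates no outer redex: it rewrites a subterm occurring under a $\bang{(\cdot)}$, so every redex it creates is again under a $\bang{(\cdot)}$, while $\mathbf R$ is not; hence $\mathbf R$ is the residual of an outer redex $\mathbf R_0$ already present in $\Sum A$, at a position either disjoint from $r$ or strictly containing $r$ (it cannot lie strictly inside $r$, again because everything inside $r$ is under a $\bang{(\cdot)}$). If the positions are disjoint the two steps commute, modulo the outer step possibly erasing, copying, or splitting over sums the summand carrying $r$. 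If $\mathbf R_0=(\lambda x.M)P$ contains $r$, then $r$ sits in $M$ or inside a resource of the bag $P$; firing $\mathbf R_0$ first, $(\lambda x.M)P\redto[g]M\gen Px\expo 0x$, leaves zero, one, or several residuals of $r$, in general scattered over several summands, because $\linear{\cdot}x$ branches over the occurrences of $x$ and $\expo{N+x}x$ duplicates reusable resources; and, notably, when $r$ travels inside a reusable resource of $P$ that is substituted at an occurrence of $x$ \emph{not} under a $\bang{(\cdot)}$, that residual ceases to be under any $\bang{(\cdot)}$ and becomes an \emph{outer} redex. Reducing all these residuals and invoking the commutation of reduction with the substitutions $\linear{\cdot}x$, $\expo{\cdot}x$, hence $\gen{\cdot}x$ -- the resource-calculus analogues of the substitution lemma, available from \cite{intudiffnet,phdtranquilli} -- one checks that the outcome is exactly $\Sum C$; so $\Sum A\,\redto[og^{*}]\,\Sum D\,\redto[ig^{*}]\,\Sum C$, with at least one outer step performed.

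The last step is to iterate: given $\rho\colon M\redto[g^{*}]\Sum A$, read as an alternation of inner and outer steps, push all outer steps to the front -- and this is where I expect the real obstacle to lie. A naive induction on the number of outer steps, pushing the first outer step leftwards past the preceding inner steps by repeated swaps, fails: the swap may \emph{increase} that number (a promoted residual of $r$ is a fresh outer step) and may \emph{duplicate} inner steps, so inversions with later outer steps multiply; neither a plain step count, nor the redex sizes, nor the number of $\bang{(\cdot)}$'s above the fired redexes decreases in any obvious order. What is needed is a genuinely well-founded measure on reduction sequences, insensitive to duplication of inner steps and absorbing the promoted ones -- and constructing it and verifying the decrease is the technical core of the theorem, carried out for the giant reduction in \cite{PaganiTranquilli09}; the $\red b$- and $\red{nd}$-statements then come for free as explained above.
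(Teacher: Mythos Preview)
The paper does not prove this theorem: it is quoted from \cite{PaganiTranquilli09} (the result there is stated for the giant reduction), and the surrounding text merely remarks that ``their result can easily be extended to other reductions, in particular to the non-deterministic one.'' So there is no proof in the paper to compare your attempt against.

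That said, your outline matches the shape of the argument one expects from \cite{PaganiTranquilli09}: a local inner/outer swap lemma, followed by an iteration whose termination requires a carefully chosen measure. You are right that the swap can multiply both outer and inner steps (a reusable resource carrying $r$ may be substituted at a linear occurrence of $x$, promoting the residual to outer), and right to flag the well-founded measure as the actual content of the proof; you also correctly defer that construction to the cited paper rather than pretending it is routine. One point you pass over a little quickly is the reduction of the $\red{nd}$ case to the $\red g$ case: an $\red{nd}$-sequence is not literally a $\red g$-sequence (each step discards summands), so you cannot just invoke the $\red g$-statement on the whole chain; what does go through is that the swap lemma itself transfers---if $M\redto[ig]\Sum B$ and some summand $B$ of $\Sum B$ satisfies $B\redto[og]\Sum C\ni C$, then the $\red g$-swap gives $M\redto[og^{*}]\Sum D\redto[ig^{*}]\Sum C$ and one extracts an $\red{nd}$-version by picking summands along the way---after which the same iteration and the same measure apply. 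This is presumably the ``easy extension'' the paper alludes to.
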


We introduce now a further classification between outer redexes.
\begin{definition}
The set of \emph{leftmost} redexes $\L(M)$ of a term $M$ or a bag $P$ are defined
  inductively by:
  $$\begin{aligned}
    \L(x) &\ass  \emptyset,\\ \L(\lambda x.M) &\ass\L(M)
  \end{aligned}
  \;\;
    \L(MP) \ass
    \begin{cases}
      \{MP\} & \text{if $M=\lambda x.M'$,}\\
      \L(M) & \text{otherwise, if $\L(M)\neq \emptyset$}\\
      \L(P) & \text{otherwise}
    \end{cases}
    \;\;
    \begin{aligned}
      \L(1) &\ass \emptyset,\\
      \L([\bang M]\cdot P &\ass \L(P),\\
      \L([M]\cdot P) &\ass \L(M) \cup \L(P)
    \end{aligned}
  $$
\end{definition}

In regular $\lambda$-calculus, the set $\L(M)$ is at most a singleton, and
$\prec_r$-standardness collapses to the regular notion of left-to-right order of redexes.

\begin{fact}
  Redexes in $\L(M)$ are exactly the $\prec_r$-minimal elements among
  all redexes of $M$.
\end{fact}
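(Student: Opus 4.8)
The plan is to prove this by a structural induction, treating terms $M$ and bags $P$ simultaneously (mirroring the way $\L$ is defined), and establishing in each case that $\L(M)$, resp.\ $\L(P)$, is exactly the set of $\prec_r$-minimal redexes. The first step is to record two auxiliary facts, each by a straightforward induction on the clauses defining $\L$: every element of $\L(M)$ is a redex occurrence that is \emph{outer}, i.e.\ not in the scope of any $\oc$; and $\L(M)$ is empty exactly when $M$ has no outer redex. Together with the second clause of Definition~\ref{def:order} — which puts every redex occurring under a $\oc$ strictly after every outer redex of the same expression — this reduces the whole statement to comparing $\L(M)$ with the $\prec_r$-minimal elements among the \emph{outer} redexes of $M$: no redex under a $\oc$ is $\prec_r$-minimal once an outer redex is present, and otherwise $\L(M)$ is already empty.

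Next I would isolate the compositional behaviour of $\prec_r$ that drives the induction. If $\Sum B$ is a proper subexpression of $\Sum A$ not occurring under a $\oc$ — as for the body of $\lambda x.M$ and for each of $M$ and $P$ in an application $\Sum A=MP$ — then, on pairs of outer redexes both lying inside $\Sum B$, the relation $\prec_r$ in $\Sum A$ coincides with $\prec_r$ in $\Sum B$; this is precisely the fourth clause of Definition~\ref{def:order}, using that the linearity of such occurrences is the same whether computed in $\Sum B$ or in $\Sum A$. Dually, two outer redexes living in two ``parallel'' subexpressions can be related only by the third clause, which applies to a term application $\Sum A=MP$ and orders those in $M$ before those in $P$; in particular, outer redexes sitting in two distinct resources of one bag $[\argu N]\cdot P$ remain $\prec_r$-incomparable, which is exactly what the clause $\L([N]\cdot P)=\L(N)\cup\L(P)$ reflects.

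Granting these, the case analysis is essentially bookkeeping. The variable and empty-bag cases are vacuous, and for $\lambda x.M$ the redexes, their occurrences' linearity, and the order on them are unchanged from $M$, so the induction hypothesis applies. For a bag $[\bang N]\cdot P$ the outer redexes, with their order, are exactly those of $P$, matching $\L([\bang N]\cdot P)=\L(P)$; for $[N]\cdot P$ the outer redexes are those of $N$ together with those of $P$, any redex of $N$ being $\prec_r$-incomparable with any redex of $P$, so the minimal ones are the union of the minimal ones of $N$ and of $P$, matching $\L([N]\cdot P)=\L(N)\cup\L(P)$. The heart of the argument is the application $\Sum A=MP$, split as in the definition of $\L(MP)$: if $M$ is an abstraction then $MP$ is itself a redex and, by the first clause of Definition~\ref{def:order}, precedes every other redex (each a proper subterm of it), so it is the unique minimal redex, matching $\L(MP)=\{MP\}$; if $M$ is not an abstraction but $\L(M)\neq\emptyset$, then $M$ has an outer redex, by the third clause every outer redex of $M$ precedes every outer redex of $P$, so the minimal redexes all lie in $M$ and, by compositionality, coincide with the minimal outer redexes of $M$, i.e.\ with $\L(M)=\L(MP)$; and if $M$ is not an abstraction with $\L(M)=\emptyset$, then $M$ has no outer redex, the outer redexes of $MP$ are exactly those of $P$, and their minimal ones coincide with the minimal outer redexes of $P$, i.e.\ with $\L(P)=\L(MP)$.

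The one point that needs care is precisely this last split: one must check that the three branches of $\L(MP)$ match ``$MP$ itself / an outer redex in the function / an outer redex in the argument'', in particular that when $M$ already carries outer redexes the outer redexes of $P$ are correctly ruled out (third clause of Definition~\ref{def:order}), while when $M$ carries only redexes under a $\oc$ those do \emph{not} block the redexes of $P$ from being minimal (second clause). Everything else, once the two auxiliary facts and the compositionality statement are in place, is a direct unwinding of Definition~\ref{def:order} and of the definition of $\L$.
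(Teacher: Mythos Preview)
The paper states this as a Fact without proof, so there is nothing to compare your argument against; the inductive scheme you set up is the natural one, and the case analysis on the constructors is correct. There is, however, one genuine gap in your reduction step. You write that ``no redex under a $\oc$ is $\prec_r$-minimal once an outer redex is present, and otherwise $\L(M)$ is already empty,'' and conclude that it suffices to compare $\L(M)$ with the $\prec_r$-minimal \emph{outer} redexes. The second half of that sentence does not close the case: if $M$ has no outer redex but does have inner ones --- for instance $M=x\bag{\bang{((\lambda y.y)\bag z)}}$ --- then $\L(M)=\emptyset$, while the set of $\prec_r$-minimal elements among \emph{all} redexes of $M$ is nonempty. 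So the Fact, read literally, fails on such terms, and your argument does not (and cannot) cover them.

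This is really an imprecision in the paper's formulation rather than a flaw in your strategy: the Fact is only ever used in the context of outer reduction (see the notation block immediately following it, and \autoref{leftmost-empty}), and it becomes true once ``all redexes of $M$'' is read as ``all outer redexes of $M$'', or equivalently once one assumes $M$ is not in outer normal form. With that restriction stated up front, your induction goes through exactly as written. I would simply make the restriction explicit and tighten the ``otherwise'' clause to say that both $\L(M)$ and the set of minimal outer redexes are empty, rather than leaving the inner-only case dangling.
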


In the following, we will consider in particular the non-deterministic reduction. So, let us introduce some notation.

\begin{notation}
Let $M \redto[ndo] N$.
$M\redto[lm]N$ denotes that the reduction fires a
redex in $\L(M)$, while we write $M\xredto[\neg lm]N$ if the redex
is not a leftmost one. Moreover $M\redto[o]N$ and $M\redto[i]N$ will be short for 
for $M\redto[ndo]N$ and $M\redto[ndi]N$ respectively.

\end{notation}

\begin{lemma}\label{lem:rightpushing}
We have the following facts on non-leftmost reduction.
\begin{itemize}
\item $\rho:\lambda x.M \xredto[\neg lm*] N$ if and only if $N = \lambda x.M'$
  and $\rho':M \xredto[\neg lm*] M'$ with $|\rho| = |\rho'|$;
\item $\rho:MP \xredto[\neg lm*] N$ if and only if $N = M'P'$, $\rho':M
  \xredto[\neg lm*] M'$ and $\rho'':P \xredto[o*] P'$ with $|\rho| =
  |\rho'| + |\rho''|$;
\item $\rho: [M]\cdot P \xredto[\neg lm*] Q$  if and only if$Q = [M']\cdot P'$,
  $\rho':M \xredto[\neg lm*] M'$ and $\rho'':P \xredto[\neg lm*] P'$
  with $|\rho| = |\rho'| + |\rho''|$;
\item $\rho: [\bang M]\cdot P \xredto[\neg lm*] Q$  if and only if$Q = [\bang
  M]\cdot P'$ and $\rho'':P \xredto[\neg lm*] P'$ with $|\rho| =
  |\rho''|$.
\end{itemize}
\end{lemma}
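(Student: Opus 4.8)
The plan is to prove Lemma~\ref{lem:rightpushing} by induction on the length $|\rho|$, and within each step by case analysis on the shape of the term, appealing to the combinatorial structure of $\L(-)$ and to the fact that non-leftmost outer redexes live in very restricted positions. The four items are tightly linked: each one simultaneously records \emph{where} non-leftmost reductions can fire inside a given syntactic constructor and \emph{how} they must behave relative to the subcomponents. So I would treat them together in a single induction rather than proving each separately.

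First I would establish the base case: if $|\rho|=0$ then $\rho$ is empty, all the decompositions are trivial, and the lengths match ($0 = 0$ or $0 = 0 + 0$). For the inductive step, consider $\rho$ of length $n+1$ as a single step $t \xredto[\neg lm] t_1$ followed by $\rho_1: t_1 \xredto[\neg lm *] t'$ of length $n$. The heart of the argument is analysing that first non-leftmost step according to the top constructor of $t$. For $t = \lambda x.M$: since $\L(\lambda x.M) = \L(M)$, a redex of $\lambda x.M$ is leftmost iff the corresponding redex of $M$ is, and an outer redex never is the abstraction itself (it is not an applicative context), so the step is $\lambda x.M \xredto[\neg lm] \lambda x.M_1$ coming from $M \xredto[\neg lm] M_1$; then apply the induction hypothesis to $\rho_1$. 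For $t = MP$: here one must observe that if $M = \lambda y.M'$ then $MP \in \L(MP)$ is itself leftmost, so a non-leftmost step of $MP$ cannot contract the head redex and cannot change $M$ into an abstraction-headed term without first going through a leftmost step — hence $M$ stays of the form "not an abstraction applied", and the non-leftmost step fires either inside $M$ (non-leftmost there, when $\L(M) \ne \emptyset$ the step in $M$ must avoid $\L(M)$, and when $\L(M) = \emptyset$ any step in $M$ is vacuously non-leftmost) or inside $P$; a step inside $P$ is automatically outer (we are in $\redto[ndo]$) but note $P$ sits in an applicative-but-not-linear\ldots actually $P$ is an argument bag, so redexes inside $P$ outside a $\oc$ are outer, and they are never in $\L(MP)$ unless $\L(M) = \emptyset$, which is exactly the bookkeeping that makes $\rho''$ an $\redto[o*]$ (plain outer, possibly leftmost-in-$P$) reduction rather than a $\xredto[\neg lm*]$ one. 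For $t = [M]\cdot P$ and $t = [\bang M]\cdot P$ one unwinds $\L([M]\cdot P) = \L(M) \cup \L(P)$ and $\L([\bang M]\cdot P) = \L(P)$ similarly; in the $\oc$ case the sub-bag $\bang M$ contributes no outer redexes at all, so $M$ is untouched.

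The key technical ingredients I would isolate as small sub-claims before the induction are: (a) in $MP$, any outer redex contracted by a non-leftmost step leaves the "shape" of $M$ stable enough that $M$ never acquires a head $\lambda$ through such steps — more precisely, reductions strictly inside $M$ or strictly inside $P$ commute freely and do not create head redexes of $MP$; (b) the asymmetry between the two occurrences of $P$: reductions inside the argument bag of an application $MP$ are outer but may well be leftmost \emph{of the whole term} precisely when $M$ has no leftmost redex, which is why the second component is $\redto[o*]$; whereas inside a bag $[M]\cdot P$ the residual structure keeps $\L$ as a union, so reductions in the $P$-tail remain non-leftmost iff they avoid $\L(P)$, giving $\xredto[\neg lm*]$; and (c) the additivity of lengths under these decompositions, which is immediate once one knows a single non-leftmost step touches exactly one of the subcomponents. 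The "only if" directions are the substantive ones; the "if" directions follow by reassembling a reduction in a subcomponent into a reduction of the whole term and checking that leftmost-ness is preserved under the relevant context (this uses the $\L$ recursion in the forward direction).

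The main obstacle I expect is item two, the application case, and specifically making airtight the claim that non-leftmost reduction of $MP$ cannot contract redexes that "cross" between $M$ and the argument or that turn $M$ into something abstraction-headed: one has to rule out, e.g., a non-leftmost step deep in $M$ that exposes a head $\lambda$ and thereby would make the next step's redex leftmost in a way the decomposition does not anticipate — but in fact that is fine, because such a next step \emph{is} then a leftmost step and so excluded from $\xredto[\neg lm*]$, which is exactly why the statement only asserts what it asserts. Pinning down this "$M$ never becomes abstraction-headed along a non-leftmost reduction" invariant cleanly — presumably via a separate easy lemma that $M \xredto[\neg lm*] \lambda y.M'$ is impossible when $M$ is not already an abstraction, or rather that along non-leftmost reductions $M$ keeps whatever non-redex head it has — is where the care is needed. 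The rest is bookkeeping on $\L(-)$ and on lengths.
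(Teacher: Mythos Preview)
The paper states this lemma without proof, so there is no argument in the paper to compare against; your induction on $|\rho|$ with case analysis on the head constructor is the natural approach and matches what one would write.

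There is, however, a real gap in your treatment of item two, and it lies slightly to the side of where you locate it. You correctly argue that a non-abstraction $M$ cannot become an abstraction along $\xredto[\neg lm*]$, but you never treat the case where $M$ \emph{already is} an abstraction. If $M=\lambda y.M_0$ then $\L(MP)=\{MP\}$, so any step strictly inside $M_0$ is non-leftmost in $MP$, including one that lies in $\L(M_0)=\L(M)$ and is therefore leftmost in $M$. Concretely, with $M=\lambda x.(\lambda y.y)[z]$ and $P=1$, the step $MP\xredto[\neg lm](\lambda x.z)\,1$ is non-leftmost in $MP$ while the underlying $M\to\lambda x.z$ is leftmost in $M$; hence no $\rho':M\xredto[\neg lm*]M'$ of length one exists and the ``only if'' direction fails as stated. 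Dually, your claim that the ``if'' direction is routine is also wrong: take $M=x$, $\rho'$ empty, and $\rho'':P\redto[lm]P'$ a single leftmost step; the reassembled step $xP\to xP'$ is leftmost because $\L(xP)=\L(P)$, so no non-leftmost $\rho$ of length one exists. In short, item two as a literal biconditional is false. What the proof of \autoref{lem:stdouter} actually requires is only the weaker decomposition $N=M'P'$ with $M\xredto[o*]M'$, $P\xredto[o*]P'$ and lengths adding up, refined by the observation that the $M$-steps are non-leftmost in $M$ \emph{whenever $M$ is not an abstraction}. Your sketch does establish that much; the defect is not noticing that the stated ``if and only if'' overshoots, so that the converse you dismiss as bookkeeping is in fact not true.
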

The proof of standardization is based on an inversion property between outer redexes,
saying that a not-leftmost reduction followed by a leftmost one can always be replaced by a leftmost followed by 
an outer. This is the upcoming \autoref{lem:inversion}. In order to get it we first prove the following intermediate properties.

\begin{lemma} \label{lem:triangolo}
If $O \redto[o] O'$ then
$\forall L' \in O'\gen{Q}{x}\expo{0}{x}
\exists L \in O\gen{Q}{x}\expo{0}{x}$
such that $L \redto[o] L'$.
\end{lemma}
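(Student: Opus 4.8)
The plan is to proceed by induction on the structure of $O$, following the shape of the outer step $O \redto[o] O'$. Since an outer step is a linear-context closure of a giant redex, either $O$ \emph{is} itself a redex $(\lambda y.M)R$ fired at the root, or the step takes place strictly inside $O$ under a linear context: $O = C\hole{O_1}$ with $O_1 \redto[o] O_1'$ and $C$ linear. The linear-context cases are routine propagations: when $O = \lambda y.O_1$, resource substitution commutes with abstraction, so $O\gen Qx\expo0x = \lambda y.(O_1\gen Qx\expo0x)$ summand-wise (after the usual $\alpha$-renaming so $y$ is fresh for $Q$), and we apply the induction hypothesis inside. When $O = O_1 P$, the substitution distributes over application as a sum $O_1\gen Qx\expo0x\cdot P\gen Qx\expo0x$ taken over all ways of splitting the multiset $Q$; for each summand $L' = L_1'\cdot P'$ of $O'\gen Qx\expo0x$ we pick the matching $L_1 \redto[o] L_1'$ from the IH and set $L = L_1\cdot P'$, which is a summand of $O\gen Qx\expo0x$ and reduces to $L'$ by the applicative-context closure. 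The bag case $O = \bag{O_1}\cdot P$ (hole linear, so not under a $\oc$) is analogous, using the linear-substitution clauses of Figure~\ref{fig:linsub}: each summand of $\bag{O_1}\cdot P$ under the substitution is of the form $\bag{O_1\gen{Q_1}x\dots}\cdot(\dots)$, and we track the summand containing the chosen $L_1'$.

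The genuinely delicate case is the base case, where $O = (\lambda y.M)R \redto[g] M\gen Ry\expo0y = O'$ is fired at the root. Here I must show that every summand $L'$ of $O'\gen Qx\expo0x = (M\gen Ry\expo0y)\gen Qx\expo0x$ arises, via a single root giant step, from some summand $L$ of $O\gen Qx\expo0x = ((\lambda y.M)R)\gen Qx\expo0x$. Now $((\lambda y.M)R)\gen Qx\expo0x$ is, after pushing the substitution through the application, a sum of terms $(\lambda y.M\gen{Q_1}x\expo0x)\,(R\gen{Q_2}x\expo0x)$ ranging over the splittings of $Q$ into $Q_1,Q_2$ (with the reusable part of $Q$ shared, not split), and each such term is again a root redex that giant-reduces to $(M\gen{Q_1}x\expo0x)\gen{R\gen{Q_2}x\expo0x}y\expo0y$. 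So the heart of the matter is a \emph{commutation of substitutions} statement: the family of terms $\bigl\{(M\gen{Q_1}x\expo0x)\gen{R\gen{Q_2}x\expo0x}y\expo0y\bigr\}$ indexed by splittings of $Q$ has the same sum as $(M\gen Ry\expo0y)\gen Qx\expo0x$. This is where the main obstacle lies: it is exactly the kind of substitution lemma that underlies confluence of $\Lamr$, requiring one to commute a resource/bag substitution $\gen Ry$ past $\gen Qx$ (with $x \notin \FV R$, $y \notin \FV Q$ by the Barendregt convention), remembering that $\gen{\bang N}x = \expo{N+x}x$ duplicates while $\gen Nx = \linear Nx$ is linear, so that substituting $Q$ into $M\gen Ry$ both substitutes into the copies of $R$ that $\gen Ry$ inserted and into the rest of $M$, precisely matching the summation over how $Q$ is distributed between "the $R$ part" ($Q_2$) and "the $M$ part" ($Q_1$). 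I expect this identity to be available either as a cited lemma from \cite{intudiffnet,phdtranquilli} or provable by a straightforward (if tedious) induction on $M$, treating separately the variable cases $y$, $x$, other; the abstraction case; and the application case, where the bilinearity of linear substitution and the defining clauses of Figure~\ref{fig:linsub} generate exactly the needed splitting sums.

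Once that substitution identity is in hand, the base case closes immediately: given $L'$, it is a summand of $(M\gen{Q_1}x\expo0x)\gen{R\gen{Q_2}x\expo0x}y\expo0y$ for some splitting $Q_1,Q_2$, so take $L \ass (\lambda y.M\gen{Q_1}x\expo0x)\,(R\gen{Q_2}x\expo0x)$, which is a summand of $O\gen Qx\expo0x$ and satisfies $L \redto[o] L'$ by a root giant step. Combining the base case with the inductive propagation cases above yields the statement for all $O \redto[o] O'$.
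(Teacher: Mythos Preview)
Your proposal is correct and follows essentially the same route as the paper: structural induction on $O$, with the routine propagation cases handled by the inductive hypothesis and the root-redex case reduced to the substitution-commutation identity $\sum_{Q_1,Q_2}(M\gen{Q_1}x)\gen{R\gen{Q_2}x}y\expo0y = (M\gen Ry\expo0y)\gen Qx$, which the paper also simply asserts. Two minor remarks: (i) in this paper's notation $\redto[o]$ abbreviates the \emph{non-deterministic} outer step, so $O'$ is a single summand of $M\gen Ry\expo0y$ rather than the whole sum---your argument still goes through, since $O'\gen Qx\expo0x$ is then a sub-sum of the right-hand side; (ii) the paper lightens the bookkeeping by first proving the version without the trailing $\expo0x$ and observing that the full statement follows, which is why its commutation identity carries no $\expo0x$ on the inner pieces.
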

\begin{proof}
We will prove that $\forall L' \in O'\gen{Q}{x}
\exists L \in O\gen{Q}{x}$
such that $L \redto[o] L'$. Then the statement of the lemma follows easily.
By induction on $O$. 
\begin{enumcases}
\item $O=x$ and $O=y$ are not possible.
\item $O=\lambda y.M$.
By inductive hypothesis.
\item $O=(\lambda y.M)P$.
There are three cases: $\lambda y.M  \redto[o]  \lambda y.M'$, $ P \redto[o] P'$,
$(\lambda y.M)P \redto[o] O' \in M\gen{P}{y}\expo{0}{y}$.
Let $M  \redto[o]  M'$.
$(\lambda y.M)P \gen{Q}{x}= \sum_{Q_{1},Q_{2}}((\lambda y.M) \gen{Q_{1}}{x})(P\gen{Q_{2}}{x})$,where
$Q_{1},Q_{2}$ range over all the possible decomposition of $Q$ into two parts, counting the reusable resources
with all the possible multiplicities. This means that in case $Q_{1},Q_{2}$ are considered two different subterms also
in case they are syntactically equal.
By inductive hypothesis, for all $L' \in (\lambda y.M') \gen{Q_{1}}{x}$ there is
$L \in (\lambda y.M) \gen{Q_{1}}{x}$ such that $L \redto[o] L'$, and the result follows by transitivity of $\in$.
The case $ P \redto[o] P'$ is similar.

Let $(\lambda y.M)P \redto[o] O' \in M\gen{P}{y}\expo{0}{y}$.
Then we have that the substitution $(\lambda y.M)P\gen{Q}{x}$ is equal to the sum $\sum_{Q_{1},Q_{2}}(\lambda y.M \gen{Q_{1}}{x}) (P\gen{Q_{2}}{x}) $, where
$Q_{1},Q_{2}$ range as before. Since each component of this sum is a redex (the substitutions do not modify the
external shape of the terms), we can reduce each redex, so obtaining that for all
$L\in (\lambda y.M \gen{Q_{1}}{x}) (P\gen{Q_{2}}{x})$, $L \redto[o] L' \in M  \gen{Q_{1}}{x}
\gen{P\gen{Q_{2}}{x}}{y} \expo{0}{y}$.
On the other side, $M\gen{P}{y}\expo{0}{y}\gen{Q}{x}$ is equal to the sum
$\sum_{Q_{1},Q_{2}}M \gen{Q_{1}}{x}
\gen{P\gen{Q_{2}}{x}}{y} \expo{0}{y} $, and the proof is done.

\item $O=MP$ and $O'= M'P$ or $O=MP$ and $O'= MP'$.
All by inductive hypothesis.\qedhere
\end{enumcases}

\end{proof}

\begin{lemma} \label{lem:quadrato}
If $Q \redto[o]  Q'$ then
$\forall L' \in O\gen{Q'}{x} \expo{0}{x}
\exists L \in O\gen{Q}{x} \expo{0}{x}$
such that $L \redto[o] L'$.
\end{lemma}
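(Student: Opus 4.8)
The plan is to prove \autoref{lem:quadrato} by induction on the term $O$, following the same pattern as the proof of \autoref{lem:triangolo}, but this time the reduction happens inside the substituted bag $Q$ rather than in $O$. As in the previous lemma, I would first reduce to the simpler claim that $Q \redto[o] Q'$ implies that $\forall L' \in O\gen{Q'}{x}\ \exists L \in O\gen{Q}{x}$ with $L \redto[o] L'$, since the outer $\expo{0}{x}$ only erases remaining occurrences of $x$ and commutes harmlessly with everything. The key structural fact to use repeatedly is that resource substitution distributes over the syntactic constructors via the notation of Figure~\ref{fig:notsum}: for instance $(MP)\gen{Q}{x} = \sum_{Q_1,Q_2} (M\gen{Q_1}{x})(P\gen{Q_2}{x})$, where $Q_1,Q_2$ range over all ways of splitting the multiset $Q$ into two parts (with reusable resources duplicated at every multiplicity, as stressed in the previous proof). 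So whenever $Q \redto[o] Q'$, this reduction takes place inside some single resource $\argu{N}$ occurring in $Q$, and after splitting it lands inside exactly one of $Q_1$ or $Q_2$.

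For the inductive argument: the variable cases $O=x$, $O=y$ are immediate because $x\gen{Q}{x} = Q$ itself is not of the required form unless $O=x$, in which case $O\gen{Q}{x} = Q \redto[o] Q' = O\gen{Q'}{x}$ directly — actually here I need to be a bit careful, since $O$ is a term but $Q$ is a bag, so $O = x$ only arises when we think of $O$ sitting in an applicative position; in the $\lambda x.M$ case I would just invoke the inductive hypothesis on $M$; in the application case $O = (\lambda y.M)P$ I distinguish the subcases according to whether the occurrences of $x$ that get $Q$-substituted lie in $M$ or in $P$. The crucial case is again $O = (\lambda y.M)P$ viewed as a redex. Here $(\lambda y.M)P\gen{Q}{x} = \sum_{Q_1,Q_2}(\lambda y.M\gen{Q_1}{x})(P\gen{Q_2}{x})$, every summand is still a redex, and firing it outer gives $M\gen{Q_1}{x}\gen{P\gen{Q_2}{x}}{y}\expo{0}{y}$; on the other side $M\gen{P}{y}\expo{0}{y}\gen{Q'}{x} = \sum_{Q_1',Q_2'} M\gen{Q_1'}{x}\gen{P\gen{Q_2'}{x}}{y}\expo{0}{y}$. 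One then matches each summand for $Q'$ with the corresponding summand for $Q$: since $Q\redto[o]Q'$ fires one redex, each splitting $(Q_1',Q_2')$ of $Q'$ is obtained from a splitting $(Q_1,Q_2)$ of $Q$ by reducing inside one of the two parts, and the inductive hypothesis (applied either to $M$ with the reduction in $Q_1$, or inside the argument substitution with the reduction in $Q_2$) provides the required $L \redto[o] L'$.

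The main obstacle I expect is the bookkeeping around the splittings $Q_1,Q_2$ in the presence of reusable resources: an outer step $Q\redto[o]Q'$ happening inside a linear resource $[N]$ of $Q$ behaves differently from one that is blocked (outer reduction cannot go under a $\bang{(\cdot)}$, so a reduction inside $\bang{N}$ is never outer, which actually simplifies matters). The delicate point is ensuring the correspondence between summands of $O\gen{Q'}{x}$ and summands of $O\gen{Q}{x}$ is the ``right'' one — i.e.\ that reducing a redex inside a chosen part $Q_i$ of the split commutes with choosing the split, so that the set of summands appearing after reducing $Q$ to $Q'$ is exactly the set obtained by reducing within one part of each split of $Q$. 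This is essentially a naturality statement for resource substitution with respect to outer reduction in the substituted bag, and once it is phrased carefully the inductive hypothesis closes each case. A secondary subtlety, as in \autoref{lem:triangolo}, is that $\gen{P\gen{Q_2}{x}}{y}$ is itself a bag substitution that must be unfolded and shown to interact correctly with the outer reduction coming from the inductive hypothesis; here one uses that $y\notin\FV(Q)$ (by the usual variable convention) so that the substitutions for $x$ and $y$ commute.
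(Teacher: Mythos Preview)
Your proposal contains a genuine confusion: you set it up as an induction on $O$ ``following the same pattern as \autoref{lem:triangolo}'', and then your ``crucial case'' $O=(\lambda y.M)P$ is analyzed exactly as in that lemma, \emph{firing} the redex $(\lambda y.M)P$ and comparing with $M\gen{P}{y}\expo{0}{y}\gen{Q'}{x}$. But nothing in \autoref{lem:quadrato} asks you to reduce $O$: the hypothesis is $Q\redto[o]Q'$, and the conclusion compares $O\gen{Q}{x}\expo{0}{x}$ with $O\gen{Q'}{x}\expo{0}{x}$ for the \emph{same} $O$. The redex shape of $O$ is irrelevant, and there is no ``firing'' subcase to treat. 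What you wrote under ``crucial case'' is literally (a sketch of) the redex case of \autoref{lem:triangolo}, not of the present lemma; as written it does not close any case here.

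The paper's proof proceeds differently and more directly: it is by induction on $Q$, not on $O$. One peels off one resource from $Q$, writing $Q=[H]\cdot P$ (or $[H^{\oc}]\cdot P$) and using $O\gen{Q}{x}\expo{0}{x}=O\linear{H}{x}\gen{P}{x}\expo{0}{x}$. If the outer step lies in $P$, the inductive hypothesis applies to each addend $N\in O\linear{H}{x}$. If the outer step is $H\redto[o]H'$ (necessarily $H$ linear, since outer reduction does not go under ${\oc}$), then each addend $O_j\in O\linear{H}{x}$ contains the freshly inserted copy of $H$ in a linear position, so $O_j\redto[o]O_j'$ by reducing that copy; one then invokes \autoref{lem:triangolo} (with $O_j\redto[o]O_j'$ playing the role of $O\redto[o]O'$ there) to push this step through the remaining substitution $\gen{P}{x}\expo{0}{x}$. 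So \autoref{lem:triangolo} is not a template to imitate but a lemma to \emph{apply}. An induction on $O$ as you suggest can be made to work, but only after you drop the spurious ``fire the redex'' case and extend the induction mutually to bags; the paper's induction on $Q$ avoids both complications.
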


\begin{proof}
By induction on $Q$. $Q$ cannot be $1$ as it would be normal.

If $Q=\lbrack H \rbrack \cdot P$ then
$O \gen{Q}{x} \expo{0}{x} =
O \linear{H}{x} \gen{P}{x} \expo{0}{x}$.
We proceed by cases:
\begin{enumcases}
\item The reduction is on $P$, \emph{i.e.}\ $ [H] \cdot P \redto[o] [H] \cdot P' $.
For all $N \in O \linear{H}{x}$, $N\gen{P}{x} \redto[o]N\gen{P'}{x}$. Then we have by induction that
for all $L \in N\gen{P}{x} \expo{0}{x}$ there is $L' \in N \gen{P'}{x} \expo{0}{x}$ such that $L \redto[o] L'$.
So the result follows.

\item The reduction is on $H$, \emph{i.e.}\ $ [H] \cdot P \redto[o] [H'] \cdot P $).
Let us set $
O\linear{H}{x}\gen{P}{x}\expo{0}{x} =
(O_1 + ... + O_k) \gen{P}{x}\expo{0}{x}$,
where $H$ occurs in all $O_{j}$ ($1 \leq j \leq k$), since the substitution is linear.
Let $O_{j} \redto[og] O^{j}_{1}+...+O^{j}_{m}$ by reducing the occurrence of $H$ in it.
So $O_{j} \redto[o] O^{j}_{i}$ ($1 \leq i \leq m_{j}$), and, by \autoref{lem:triangolo},
for all $L' \in O^{j}_{i} \gen{P}{x} \expo{0}{x}$, there is $L\in O_{j}\gen{P}{x} \expo{0}{x}$
such that $L\redto[o] L' $. Since $O_{j}\in O\linear{H}{x}$ and $O^{j}_{i}\in O\linear{H'}{x}$, the proof follows.\
\end{enumcases}

If $Q= \lbrack H^! \rbrack \cdot P$
the reduction on $P$, and the case is similar to the first case of the previous point.\qedhere



\end{proof}

\begin{lemma}[Inversion] \label{lem:inversion}
$M \redto[\neg lm] M'  \redto[lm] N $ implies $M  \redto[lm] M'' \redto[ndo] N$, for some $M''$.
\end{lemma}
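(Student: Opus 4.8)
The plan is to proceed by induction on the structure of $M$, following the inductive description of $\L(M)$ and the structural analysis of non-leftmost reductions provided by \autoref{lem:rightpushing}. The key case is when the leftmost redex $R$ fired in $M' \redto[lm] N$ is created (or carried) by the non-leftmost step $M \redto[\neg lm] M'$, so that the heart of the argument is a local commutation: a non-leftmost outer step followed by a leftmost step can be rearranged into a leftmost step followed by an outer step. The two geometric lemmas already proved, \autoref{lem:triangolo} and \autoref{lem:quadrato}, are designed precisely for this: they say that firing the head redex $(\lambda x.M)P \redto[g] M\gen{P}{x}\expo{0}{x}$ commutes with outer reductions happening in the function body ($O \redto[o] O'$, Lemma \ref{lem:triangolo}) or in the argument bag ($Q \redto[o] Q'$, Lemma \ref{lem:quadrato}), in the sense that each summand of the ``reduce later'' side is reached by an outer step from some summand of the ``reduce first'' side.

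Concretely, I would split on the shape of $M$. If $M = \lambda x.M_0$, then by \autoref{lem:rightpushing} both steps take place in the body, and the claim follows immediately from the induction hypothesis applied to $M_0$. If $M = M_1 P$, there are several subcases according to where the two redexes lie. When $M_1 = \lambda y.M_1'$ is an abstraction, the leftmost redex of $M$ is $M$ itself (the head redex); the non-leftmost step $M \redto[\neg lm] M'$ then happens strictly inside $M_1$ or inside $P$ (outer in $P$, by \autoref{lem:rightpushing}, since $P$ sits in applicative-but-possibly-non-leftmost position), yielding $M' = (\lambda y.N_1)P$ or $M' = (\lambda y.M_1'')P'$ with the appropriate outer reductions; the subsequent leftmost step of $M'$ must be its head redex $M' \redto[g] \cdots$, and now \autoref{lem:triangolo} (for a step in the body) or \autoref{lem:quadrato} (for a step in the bag) provides an $M''$ reachable from $M$ by firing its head redex, i.e. by $M \redto[lm] M''$, with $M'' \redto[ndo] N$. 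When $M_1$ is not an abstraction, the leftmost redex of $M$ lies in $M_1$ (if $\L(M_1)\neq\emptyset$) or in $P$; in either case both steps stay within one immediate subexpression — using that non-leftmost reduction on $M_1 P$ restricts to non-leftmost reduction on $M_1$ and outer reduction on $P$ (\autoref{lem:rightpushing}) — and one checks that reducing inside $M_1$ cannot turn $M_1$ into an abstraction unless the step was itself the creation of the head redex, a situation already handled; so the induction hypothesis on the relevant subexpression closes the case. The bag cases $M = [H]\cdot P$ and $M = [\bang H]\cdot P$ are analogous and again reduce to the induction hypothesis on the components via \autoref{lem:rightpushing}, noting that in a bag there is no head redex to create.

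The main obstacle I expect is the bookkeeping around \emph{redex creation}: one must argue that the only way the leftmost step of $M'$ fails to already ``exist'' as a (possibly non-leftmost) redex of $M$ is the scenario where the non-leftmost step fed a substitution into the function position, turning some $M_1 P$ with $M_1$ not an abstraction into one with an abstraction — and this is exactly the configuration covered by Lemmas \ref{lem:triangolo} and \ref{lem:quadrato}. A secondary subtlety is that reductions produce \emph{sums}, so ``$M''$'' must be chosen as a particular summand; here the existential quantifiers in \autoref{lem:triangolo} and \autoref{lem:quadrato} ($\forall L' \exists L$ with $L \redto[o] L'$) are precisely what let us pick the right $M''$ for the given target summand $N$. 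Once these points are organized, each case is a routine application of the induction hypothesis together with one of the two commutation lemmas, so I would not expect further difficulties.
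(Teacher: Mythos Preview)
Your plan is correct and matches the paper's proof: induction on $M$, with \autoref{lem:triangolo} and \autoref{lem:quadrato} handling, respectively, the cases where the non-leftmost outer step lies in the body $O$ or in the argument bag $Q$ of the head redex $(\lambda x.O)Q$, and the remaining cases reducing either to the induction hypothesis or to a trivial swap of independent steps. The paper organises the induction via the spine form $M=\lambda\vec y.(\lambda x.O)QP_1\cdots P_n$ (or $\lambda\vec y.xP_1\cdots P_n$) rather than peeling off one application at a time as you do, but this is only a cosmetic difference.

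Your closing paragraph on ``redex creation'' is confused, though it does not damage the plan itself. A non-leftmost outer step on $M_1P$ with $M_1$ not an abstraction cannot turn $M_1$ into an abstraction: since $M_1$ is then an application $M_2R$, a non-leftmost step cannot fire its head redex (that would be leftmost), so $M_1'$ is still an application. Hence there is no head-redex creation to worry about. Moreover, Lemmas~\ref{lem:triangolo} and~\ref{lem:quadrato} are not about creation at all; they apply precisely when the head redex $(\lambda x.O)Q$ \emph{already} exists and state that an outer step inside $O$ (resp.\ $Q$) commutes with firing it---which is exactly how you (correctly) used them in the preceding paragraph. Finally, the claim that in the non-abstraction case ``both steps stay within one immediate subexpression'' is not quite right: the non-leftmost step may be in $P$ while the leftmost step is in $M_1$, and these simply commute independently (this is the paper's Case~1).
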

\begin{proof}
We proceed by induction on $M$.\\
Let $M = \lambda \vec{y} .(\lambda x.O)Q P_1...P_j...P_n$, so ${\cal L}(M)=\{(\lambda x.O)Q\}$.
Non leftmost reductions on $M$ can be done in $O$, in $Q$ or in $P_j$ ($1 \leq j \leq n$). We procede by cases:
\begin{enumcases}
\item The reduction is on $P_j (1 \leq j \leq n)$.].
Let $P_j \redto[g] P'_j + \mathbb{S}$.
We have that
$$M \redto[\neg lm] \lambda \vec{y}. (\lambda x.O) Q P_1 ... P'_j ... P_n \redto[g]
\lambda \vec{y}.O \gen{Q}{x}\expo{0}{x} P_1 ... P'_j ... P_n.$$
Moreover, by reducing the leftmost redex
$$M \redto[g] \lambda \vec{y}.O \gen{Q}{x} \expo{0}{x} P_1 ... P_j ... P_n =
\lambda \vec{y}.(O_1 + ... +O_k) P_1 ... P_j ... P_n,
$$
so that
$$M \redto[lm] \lambda \vec{y}.O_h P_1 ... P_j ... P_n \redto[o]
\lambda \vec{y}.O_h P_1 ... P'_j ... P_n$$
for all $1\leq h \leq k$.

\item The reduction is on $Q$.
Let $Q \redto[o] Q'$ and let $M \redto[\neg lm] \lambda \vec{y}.(\lambda x.O)Q' P_1 ... P_n  \redto[lm]
\lambda \vec{y}. \overline{M} P_1 ... P_n$, where $\overline M$ is such that
$\overline{M} \in O\gen{Q'}{x} \expo{0}{x} $.
Moreover, by reducing the leftmost redex, we also have the reduction
$M \redto[g] \lambda \vec{y}. O\gen{Q}{x} \expo{0}{x} P_1 ... P_n $.
By \autoref{lem:quadrato}, $Q \redto[o]Q'$ implies that for all $L'\in O\gen{Q'}{x} \expo{0}{x}$,
there exists $
L \in O\gen{Q}{x} \expo{0}{x}$ such that $L \redto[o] L'$.
So there is $\overline{\overline{M}}\in O\gen{Q}{x}\expo{0}{x}$ such that
$M \redto[lm] \lambda \vec{y}. \overline{\overline{M}}  P_1 ... P_n \redto[o] \lambda \vec{y}. \overline{M} P_1 ... P_n.$

\item The reduction is in O.
Let $O \redto[o] O' $, and let
$M \redto[\neg lm] \lambda \vec{y}.(\lambda x.O')Q P_1...P_n \redto[lm] \lambda \vec{y}. \overline{M}P_1...P_n$,
where $\overline{M}$ is such that $\overline M \in O'\gen{Q}{x} \expo{0}{x}$.
Again if we reduce the leftmost redex, we have the reduction
$M \redto[g] \lambda \vec{y}. O \gen{Q}{x} \expo{0}{x} P_1 ... P_n$.
$O \redto[o]O'$ implies, by \autoref{lem:triangolo}, $\forall L' \in O'\gen{Q}{x} \expo{0}{x},
 \exists L \in O \gen{Q}{x} \expo{0}{x}$ such that
$L \redto[o] L'$. So there is $\overline{\overline{M}}\in O \gen{Q}{x} \expo{0}{x}$ such that we
can compose the reductions
$M \redto[lm] \lambda \vec{y}. \overline{\overline{M}}  P_1 ... P_n \redto[o] \lambda \vec{y}.\overline{M}
 P_1 ... P_n $.
\end{enumcases}

Let $M=\lambda \vec{y}.x P_1 ... P_n$, and let $P_{i} \redto[\neg lm]P'_{i}$ and $P_{j} \redto[lm]P'_{j}$.
In case $i \not = j$, the proof is trivial. In case $i=j$ the proof is by induction on $P_{i}$.\qedhere

\end{proof}
\begin{corollary}\label{cor:reordering}
  If $\rho:M\xredto[o*]M'$ then there are $\sigma:M\xredto[lm*]M''$
  and $\pi:M''\xredto[\neg lm*]M'$ with $|\sigma| +
  |\pi| = |\rho|$.
\end{corollary}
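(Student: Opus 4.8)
The plan is to iterate the Inversion Lemma (\autoref{lem:inversion}) so as to bubble every leftmost step leftwards past all the non-leftmost steps that precede it, exactly as one derives standardization from a single-step inversion property in the classical $\lambda$-calculus. Concretely, I would first record the obvious decomposition of an outer reduction: since $\redto[o]$ is the same relation as $\redto[ndo]$, every step in $\rho$ is either a leftmost step $\redto[lm]$ or a non-leftmost one $\xredto[\neg lm]$. So $\rho$ factors as a finite alternating word in $\redto[lm]$ and $\xredto[\neg lm]$, and the goal is to show it can be rearranged so that all $\redto[lm]$ steps come first, without changing the endpoints or the total length.

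The core of the argument is a measure-decreasing rewriting on such words. To $\rho$ I associate the number $n(\rho)$ of "inversions", i.e.\ of pairs of positions $(i,j)$ with $i<j$ such that step $i$ is non-leftmost and step $j$ is leftmost. If $n(\rho)=0$ then $\rho$ already has the required shape $\sigma;\pi$ and we are done. Otherwise there is an adjacent pattern $M\xredto[\neg lm]M'\redto[lm]N$ inside $\rho$; by \autoref{lem:inversion} we may replace this length-2 sub-chain by $M\redto[lm]M''\redto[ndo]N$. The resulting global chain $\rho'$ has the same source, the same target, and the same length as $\rho$. The key bookkeeping point is that $n(\rho')<n(\rho)$: the new leftmost step now sits strictly to the left of where the old one did, so it no longer forms an inversion with the non-leftmost step it just passed, while the replacement of the second step by a generic outer step $\redto[ndo]$ cannot create new inversions beyond possibly turning that step non-leftmost — and even then, a non-leftmost step in position $j{+}1$ only forms inversions with later leftmost steps, of which there are no more than before, whereas we have destroyed at least the inversion between positions $i$ and $j$. (This last estimate is the one place where I would be careful: one must check that converting step $j{+}1$ from leftmost to possibly-non-leftmost, at a fixed position, strictly decreases the count, which holds because that step was previously in an inversion with step $i$ and now is not, and its interactions with all steps after position $j{+}1$ are unchanged.) Hence the rewriting terminates, and the terminal chain is the desired $\sigma:M\xredto[lm*]M''$ followed by $\pi:M''\xredto[\neg lm*]M'$, with $|\sigma|+|\pi|=|\rho|$ since length is preserved at every step.

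The main obstacle is precisely the termination/measure argument: \autoref{lem:inversion} only guarantees that two adjacent steps can be swapped, but naively swapping can in principle re-order things badly if the second step becomes non-leftmost, so I need to pick a measure (the inversion count above) that is provably decreased by every application of the lemma regardless of whether the new second step is leftmost. An alternative, cleaner route that sidesteps the delicate counting is an induction on $|\rho|$ combined with an inner induction: take $\rho:M\xredto[o*]M'$ of length $k>0$; if the first step is leftmost, apply the induction hypothesis to the remaining $k-1$ steps and prepend. If the first step is non-leftmost, apply the induction hypothesis to the tail to get $M_1\xredto[lm*]M''\xredto[\neg lm*]M'$; now $M\xredto[\neg lm]M_1\xredto[lm*]M''$ can be turned, by $|\sigma_1|$-many applications of \autoref{lem:inversion} (each pushing the non-leftmost step one slot to the right through a leftmost step and replacing that leftmost step by an outer step — but one checks the pushed-through steps stay leftmost because a leftmost redex cannot be destroyed by a step that does not fire it and does not lie inside it, this following from the syntactic description of $\L$), into $M\xredto[lm*]M''\xredto[\neg lm*]M''$, after which the two non-leftmost blocks concatenate. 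I would present the first version, as the inversion-count measure makes the length-preservation and termination completely transparent and matches the way \autoref{lem:inversion} is phrased.
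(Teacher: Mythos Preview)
Your overall strategy—iterate \autoref{lem:inversion} to bubble leftmost steps to the front—is exactly what the paper intends; the corollary is stated there without proof, as an immediate consequence of the Inversion Lemma. However, the termination measure you chose, the inversion count $n(\rho)$, does \emph{not} always strictly decrease. The issue is not the case you flagged (the second step becoming non-leftmost) but the opposite one: when the second output step stays leftmost, the change of the \emph{first} step from non-leftmost to leftmost creates a new inversion $(k,p)$ for every earlier non-leftmost step $k$, an effect you did not account for. Concretely, take a chain of length four whose step types are $\neg\mathtt{lm},\neg\mathtt{lm},\neg\mathtt{lm},\mathtt{lm}$ and apply the lemma at positions $3,4$; if the new step at position $4$ happens to be leftmost, the types become $\neg\mathtt{lm},\neg\mathtt{lm},\mathtt{lm},\mathtt{lm}$ and the inversion count rises from $3$ to $4$.

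There are easy repairs. The cleanest is to read the $0/1$ characteristic word of non-leftmost steps as a binary integer of fixed length $|\rho|$: one application of the lemma turns $\dots 1\,0\dots$ at positions $p,p{+}1$ into $\dots 0\,\ast\dots$, which strictly decreases that integer regardless of $\ast$. Alternatively, always apply the lemma at the \emph{leftmost} adjacent $\neg\mathtt{lm};\mathtt{lm}$ pair, so that there are no non-leftmost steps before position $p$; then your inversion count does drop. Your second route—induction on $|\rho|$, pushing a single leading non-leftmost step rightwards through the block of leftmost steps by repeated use of the lemma—also works and is essentially this leftmost-first strategy; note, though, that no extra check that ``pushed-through steps stay leftmost'' is needed, since \autoref{lem:inversion} already guarantees the first of its two output steps is leftmost.
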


\begin{lemma}\label{lem:divide}\
\begin{enumerate}[(i)]
 \item Given $\rho : M \redto[lm*] N$ and $\sigma: N \xredto[o*] L$,
then $\rho\sigma: M\xredto[o*] L$ is
 $\prec_r$-standard if and only if $\sigma$ is.
 In particular every chain of leftmost reductions is $\prec_r$-standard.
 \item Given $\rho : M\xredto[o*] N$ and $\sigma: N \redto[i*] L$,
then $\rho\sigma: M \redto[nd*] L$ is
 $\prec_r$-standard if and only if both $\rho$ and $\sigma$ are.
\end{enumerate}
\end{lemma}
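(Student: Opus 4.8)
The plan is to exploit that $\prec_r$-standardness is a \emph{local} condition: a reduction $\tau$ is $\prec_r$-standard precisely when, for every pair of consecutive steps $(i,i+1)$ of $\tau$, the redex $R_{i+1}$ is not the residual along $R_i$ of any redex $R'$ of $M_i$ with $R'\prec_r R_i$, and whether this holds depends only on $M_i$, $R_i$ and $R_{i+1}$. Hence for a concatenation $\rho\sigma$ the conditions to be checked partition into: those of pairs lying wholly inside $\rho$, which are verbatim the conditions for $\prec_r$-standardness of $\rho$; those of pairs lying wholly inside $\sigma$, which are verbatim the conditions for $\prec_r$-standardness of $\sigma$; and a single \emph{junction} condition, whose first step is the last step of $\rho$ and whose second step is the first step of $\sigma$. (If $\rho$ or $\sigma$ is empty the claim is immediate, so assume both non-empty.) In both items it then suffices to analyse the junction condition.

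For item~(i), leftmost steps are outer steps, so $\rho\sigma$ is indeed an outer reduction; and every step of $\rho$ fires a redex of $\L(M_i)$, which by the fact that $\L(M_i)$ consists exactly of the $\prec_r$-minimal redexes of $M_i$ has no redex strictly $\prec_r$-below it. Consequently every condition whose first step is a leftmost step is satisfied vacuously; this covers all the inside-$\rho$ conditions \emph{and} the junction condition, whatever $\sigma$ is. Thus $\rho\sigma$ is $\prec_r$-standard if and only if the inside-$\sigma$ conditions hold, that is, if and only if $\sigma$ is. Taking $\sigma$ empty gives that any chain of leftmost reductions is $\prec_r$-standard.

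For item~(ii), $\rho$ is outer and $\sigma$ inner, so $\rho\sigma$ is a non-deterministic reduction; the inside-$\rho$ and inside-$\sigma$ conditions are exactly $\prec_r$-standardness of $\rho$ and of $\sigma$, and it remains to see that the junction condition always holds. Let $R_k$ be the redex fired in the last step of $\rho$ -- an \emph{outer} redex of the term $M_k$ preceding that step -- and $R_{k+1}$ the redex fired in the first step of $\sigma$, which is an \emph{inner} redex. If $R_{k+1}$ were the residual along $R_k$ of some $R'\prec_r R_k$ in $M_k$, then, since $R_k$ being outer occupies a linear position, an easy induction on the definition of $\prec_r$ (\autoref{def:order}) -- anything $\prec_r$-below a subterm in a linear position is itself in a linear position, the clauses involving $\bang{(\cdot)}$ never producing a $\prec_r$-smaller element -- shows that $R'$ is an outer redex of $M_k$. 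But an outer step reduces no redex under a $\bang{(\cdot)}$ and its substitutions only move the linear resources of the fired bag into linear occurrences of the bound variable -- including in the $\bag{x^\oc}$ clause of Figure~\ref{fig:linsub}, where the resource still ends up in a linear position -- so residuals of outer redexes along outer steps stay outer. This would force $R_{k+1}$ to be outer, contradicting that it is a step of an inner reduction. Hence the junction condition holds, and $\rho\sigma$ is $\prec_r$-standard if and only if both $\rho$ and $\sigma$ are.

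The only genuinely delicate part -- and the one I would spell out in full -- is this junction analysis in item~(ii): it is the sole place where the order $\prec_r$ interacts with the outer/inner dichotomy, and it rests on the two structural facts that being $\prec_r$-below a subterm in a linear position forces a linear position, and that outer reduction preserves outer-ness of residuals, both obtained by unwinding \autoref{def:order}, \autoref{def:outerreduction} and the clauses of linear/resource substitution. Everything else is the routine bookkeeping of matching the standardness conditions of $\rho\sigma$ with those of $\rho$, those of $\sigma$, and the junction.
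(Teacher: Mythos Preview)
Your argument is correct and follows the same route the paper has in mind: the paper's own proof is the single sentence ``The result follows easily from the definition of $\prec_r$'', so you have simply made explicit what that sentence leaves implicit. Your decomposition into inside-$\rho$, inside-$\sigma$, and junction conditions is exactly the right bookkeeping, and the two structural observations you isolate for item~(ii)---that anything $\prec_r$-below an outer redex is itself outer, and that outer residuals along an outer step remain outer---are precisely the content hidden in the paper's appeal to \autoref{def:order}.
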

 \begin{proof}
 The result follows easily from the definition of $\prec_r$.
 \end{proof}
Now we can prove that the non-deterministic outer reduction is $\prec_{r}$-standard.
\begin{lemma}[Non-deterministic outer standard reduction]
  \label{lem:stdouter}
  If $M \xredto[o*] N$, then there is a $\prec_r$-standard
  non-deterministic outer reduction from $M$ to $N$.
\end{lemma}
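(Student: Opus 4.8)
The plan is to prove a slightly more general statement — with the source allowed to be an arbitrary expression (a term \emph{or} a bag) rather than just a term — by induction on the pair $(|\rho|,\ \text{size of the source})$ ordered lexicographically, where $\rho$ is the given outer reduction; the bag version is needed because the inductive step on an application will recurse into its argument bag. If $\rho$ is empty there is nothing to prove, so assume $|\rho|\ge 1$.

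\emph{Source a term $M$.} First I would apply \autoref{cor:reordering} to factor $\rho$ as $\sigma:M\xredto[lm*]M''$ followed by $\pi:M''\xredto[\neg lm*]N$ with $|\sigma|+|\pi|=|\rho|$. If $\sigma$ is non-empty then $\pi$ is a strictly shorter outer reduction, so the induction hypothesis gives a $\prec_r$-standard outer reduction $\pi'$ from $M''$ to $N$, and by \autoref{lem:divide}(i) — together with its last clause, that leftmost chains are always $\prec_r$-standard — the composite $\sigma\pi'$ is $\prec_r$-standard. We may therefore assume $\sigma$ is empty, i.e.\ $\rho$ is an all-non-leftmost reduction $M\xredto[\neg lm*]N$, and recurse on the shape of $M$ via \autoref{lem:rightpushing}: for $M=x$ we are done; for $M=\lambda x.M_0$ apply the induction hypothesis to the (same-length, strictly smaller) reduction inside $M_0$ and re-abstract; for $M=M_0\,P$ split $\rho$ into $\rho':M_0\xredto[\neg lm*]M_0'$ and $\rho'':P\xredto[o*]P'$, standardize both by induction (each is shorter, or of equal length with a strictly smaller source), and compose them by performing first the standardized reduction inside the function position and then the one inside the argument bag. \emph{Source a bag.} The cases $1$ and $\bag{\bang{M_0}}\cdot P_0$ reduce at once to the induction hypothesis on $P_0$ (an outer reduction cannot enter $\bang{M_0}$), while for $\bag{M_0}\cdot P_0$ I would project $\rho$ onto the components of the bag, which reduce independently, standardize each component's reduction by induction, and concatenate these in any order.

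The real content — and what I expect to be the main obstacle — is verifying that each of these concatenations of already-$\prec_r$-standard sub-reductions is again $\prec_r$-standard, which forces one to unwind \autoref{def:order} carefully. Three facts do the job: \textbf{(i)} locality of $\prec_r$: for subterms contained in a proper subexpression $\Sum B$ of $\Sum A$, the restriction of $\prec_r$ (in $\Sum A$) to them coincides with $\prec_r$ in $\Sum B$, so the standardness of a sub-reduction is insensitive to the surrounding context (this covers going under a $\lambda$, and working inside the function position, inside a single bag component, or inside the tail bag); \textbf{(ii)} the ``shape'' clauses of $\prec_r$: every linear redex occurring in the function position of an application strictly precedes every redex in the argument position (from the $MP$ clause together with the linear-vs-non-linear clause), whereas redexes lying in two distinct components of a bag are $\prec_r$-incomparable, so composing the function-position reduction before the argument-position one — resp.\ ordering component reductions arbitrarily — never fires a residual of a $\prec_r$-smaller skipped redex; \textbf{(iii)} disjointness of residuals: reducing inside one of these syntactically disjoint regions neither touches nor creates redexes in the others, hence residual ancestry never crosses from one region into another. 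Granting (i)--(iii), a direct inspection of every junction of the composite reduction shows it is $\prec_r$-standard, which closes the induction; taking the source to be a term then yields the lemma. The remaining routine ingredient is to check that all recursive calls strictly decrease the lexicographic measure, which is immediate from the length bookkeeping in \autoref{cor:reordering} and \autoref{lem:rightpushing}.
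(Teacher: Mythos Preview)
Your proposal is correct and follows essentially the same route as the paper: the same lexicographic induction on $(|\rho|,\text{size of source})$, the same use of \autoref{cor:reordering} to peel off a leftmost prefix, the same fallback to a structural case analysis via \autoref{lem:rightpushing} when no leftmost step can be extracted, and the same ``function-part first, then bag'' composition in the application case. The one point on which you are more careful than the paper is the explicit generalization of the induction hypothesis to bags; the paper applies the inductive hypothesis to the bag reduction $\rho'':P\xredto[o*]P'$ without saying so, whereas you spell out the bag clauses and the reason the composite remains $\prec_r$-standard (your facts (i)--(iii)).
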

\begin{proof}
   We reason by induction on the pair $(p,s)$, where $p=|\rho|$ is the
  length of the reduction sequence $\rho:M \xredto[o*] N$, and $s$ is
  the number of symbols in $M$.  By \autoref{cor:reordering},
  there is a reduction $\sigma_l:M \redto[lm*] M'$ and $\sigma_r: M'
  \xredto[\neg lm*] N$ with $|\sigma_l| + |\sigma_r| = |\rho| = p$.
  If $|\sigma_l|>0$ then inductive hypothesis applies to $\sigma_r$,
  giving $\prec_r$-standard $\sigma'_r: M' \xredto[o*] N$, which
  gives that $\sigma_l\sigma'_r: M\xredto[o*] N$ is
  $\prec_r$-standard by \autoref{lem:divide}.
  In case $\sigma_r: M \xredto[\neg lm*] N$ is the
  whole reduction, the proof is by cases on $M$.
 The only non-obvious case is when $M=LP$: by \autoref{lem:rightpushing} we have $N=L'P'$ and
    $\rho': L \xredto[\neg lm*] L'$ and $\rho'': P \xredto[o*]
    P'$. We can apply inductive hypothesis to both as $|\rho'| +
    |\rho''| = |\rho_r|$, and get $LP \xredto[\neg lm*] L'P
    \xredto[o*] L'P'$. Now assuming that this is not
    $\prec_r$-standard leads to a contradiction to the definition at
    the seam, since all linear positions in $L''$ are $\prec_r$ with respect to
    those in $P$.
\end{proof}
In order to prove that also inner reductions can be standardized, we need to introduce the notion of 
{\em outer shape} of a term.
\begin{definition}
  The \emph{outer shape} $\ell(M)\hole\cdot$ of a term $M$ is a context that
  is $M$ with holes replacing all exponential arguments of $M$'s
  bags.

  Formally, extending the definition to bags, we define
  $\ell(\,.\,)\hole\cdot$ inductively as follows.
  
  $\begin{array}{lll}
    \ell(x)\hole\cdot = x, &
    \ell(\lambda x.M) \hole\cdot= \lambda x.\ell(M)\hole\cdot,&
    \ell(MP)\hole\cdot = \ell(M)\hole\cdot\ell(P)\hole\cdot,\\
    \ell(1)\hole\cdot  = 1,&
    \ell(\bag M\cdot P)\hole\cdot = \bag{\ell(M)\hole\cdot}\cdot\ell(P)\hole\cdot,&
    \ell(\bag{\bang M}\cdot P)\hole\cdot= \bag{\bang{\hole\cdot }}\ell(P)\hole\cdot.
  \end{array}$
\end{definition}
\begin{property}\label{lem:inner_outer}\
\begin{enumerate}[(i)]
\item \label{lem:inner_shape} $M\xredto[i*]N$ if and only if $\ell(M)\hole\cdot=\ell(N)\hole\cdot$, and there are $k$
  terms $M'_i$ and $k$ terms $N'_i$ such that
  $M=\ell(M)_a\hole{\vec M'_i}$, $N=\ell(M)_a\hole{\vec N'_i}$ and
  $M'_i\xredto[nd*]N'_i$ for each $i$.
\item \label{lem:outer_shape_std}
  If $M=\ell(M)_{a}\hole{\vec M'_i}$ and
  $\rho_i:M'_i\xredto[nd*]M''_i$
  are standard, then there is a standard 
  $\rho':M\xredto[i*]\ell(M)_{a}\hole{\vec M''_i}$.
\end{enumerate}
\end{property}
\begin{proof}\
\begin{itemize}
\item[i)] The if direction is a direct consequence of how $\red{i}$ is
  defined and of context closedness of the reduction. We thus move to
  the only if direction.

  First, let us show that the property to prove is preserved by
  composition of reduction chains.

  Suppose
  $M\xredto[i*]N\xredto[i*]O$ with $M=\ell(M)[\vec M'_i]$,
  $N=\ell(M)_{a_1}[\vec N'_i]=\ell(M)_{a_2}[\vec N''_i]$
  and $O=\ell(M)_{a_3}[\vec O'_i]$. We can suppose
  $a_1=a_2$ by re-indexing (namely using
  $\ell(N)_{a_1}[\vec N''_{a_2^{-1}(a_1(i)}]$ and
  $\ell(O)_{a_3'}[\vec N''_{a_2^{-1}(a_1(i)}]$ with
  $a_3'=a_3\circ a_2^{-1}\circ a_1$). So we just
  forget the bijections employed, and then we have by hypothesis
  $M'_i\xredto[nd*]N'_i=N''_i\xredto[nd*]O'_i$, which is what is
  needed.

  Now, we can prove the property by reducing to the case of a single
  inner reduction, as composing multiple ones of them preserves the
  property.

  Take $M\redto[i]N$: the result follows by a straightforward
  induction on how the reduction is defined.

\item[ii)] The idea is that the reductions in the subterms can be freely
  rearrenged.

  Let us reason by generalizing to expressions and by structural
  induction on $\Sum A$.
  \begin{enumcases}
    \item $\Sum A=x$ or $\Sum A=1$: nothing to prove.
    \item $\Sum A=\lam x.N$: straightforward application of inductive
      hypothesis.
    \item $\Sum A=NP$, with $\ell(A)\hole\cdot=\ell(N)\hole\cdot\ell(P)\hole\cdot$: we can partition
      $M'_i$ into what goes in $\ell(N)\hole\cdot$ and what goes in
      $\ell(P)\hole\cdot$. We can suppose that
      $\Sum A=(\ell(N)[M'_1,\dots,M'_h])(\ell(P)[M'_{h+1},\dots,M'_k])$
      without loss of generality,
      and by inductive hypothesis get standard $\sigma:N\xredto[i*]N'$ and
      $\rho:P\xredto[i*]P'$ (with $N'$ and $P'$ the correct pluggings of
      $\ell(N)$ and $\ell(P)$).

      Now, if we reduce $\Sum A=NP\xredto[i*]N'P\xredto[i*]N'P'$
      following first $\sigma$ and then $\rho$, the resulting
      reduction must be standard as all positions in $P$ are greater
      than those in $N$ according to $\prec_r$.
    \item $\Sum A=[N]\cdot P$: exactly as above, but without any constraint
      on the order in which the reductions are composed.
    \item $\Sum A=[\bang N]\cdot P$, with $\ell(\Sum A)=\bag{\bang{\hole\cdot}}\cdot
      \ell(P)\hole\cdot$: suppose that $M'_1=N$ and
      $P=\ell(P)[M'_2,\cdots,M'_k]$. By inductive hypothesis we have a
      standard $\rho:P\xredto[i*]P'=\ell(P)[M''_i]_{i=2}^k$, and as
      positions in $[\bang N]$ and \emph{non-linear} positions in $P$
      are incomparable, we can freely combine the reductions on $M'_1$
      and $P$ to get a standard one.\qedhere
  \end{enumcases}
 \end{itemize}
\end{proof}
Now we are able to show the desired result.
\begin{theorem}[Standardization]\label{thm:standard-inner}
If $M\redto[nd*]M'$, then there is a $\prec_r$-standard chain from $M$ to $M'$.
\end{theorem}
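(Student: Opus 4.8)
The plan is to combine the three structural ingredients already assembled: \autoref{the:patra} to split an arbitrary non-deterministic reduction into an outer part followed by an inner part, \autoref{lem:stdouter} to standardize the outer part, and \autoref{lem:inner_outer} to standardize the inner part, using \autoref{lem:divide} to glue the two standard halves into a single $\prec_r$-standard chain. So first I would take the given $\rho:M\redto[nd*]M'$ and apply \autoref{the:patra} (with $\epsilon=\red{nd}$) to obtain $\Sum A'$ with $M\xredto[o*]\Sum A'$ and $\Sum A'\redto[i*]M'$.

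Next I would standardize the outer segment: by \autoref{lem:stdouter} there is a $\prec_r$-standard outer reduction $\rho_o:M\xredto[o*]\Sum A'$. For the inner segment, I would invoke \autoref{lem:inner_outer}\eqref{lem:inner_shape} on $\Sum A'\redto[i*]M'$: this gives a common outer shape, say $\Sum A'=\ell(\Sum A')_a\hole{\vec M'_i}$ and $M'=\ell(\Sum A')_a\hole{\vec N'_i}$ with $M'_i\xredto[nd*]N'_i$ for each $i$. Here one must be slightly careful that the excerpt phrases the outer shape for terms; since $\Sum A'$ is in general a sum, I would apply the reasoning componentwise (each summand has its own outer shape), which is harmless since $\redto[i]$, $\redto[o]$ and $\prec_r$-standardness all act summand by summand. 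Then I apply the induction hypothesis of the theorem to each strictly smaller reduction $M'_i\xredto[nd*]N'_i$ to get $\prec_r$-standard chains $\rho_i$, and feed them into \autoref{lem:inner_outer}\eqref{lem:outer_shape_std} to obtain a standard inner chain $\rho_i':\Sum A'\xredto[i*]M'$. Finally \autoref{lem:divide}(ii), applied to $\rho_o$ and $\rho_i'$ (both standard), yields that the concatenation $\rho_o\rho_i':M\redto[nd*]M'$ is $\prec_r$-standard.

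To make the induction precise I would argue by induction on the pair $(|\rho|,s)$ where $s$ is the number of symbols in $M$, ordered lexicographically — exactly as in the proof of \autoref{lem:stdouter} — so that the recursive calls on the subterms $M'_i$ (which are proper subexpressions of $\Sum A'$, hence strictly smaller) and on the tails of the reduction are legitimate. The base case (when $\rho$ is empty, or when no reduction is possible) is trivial since an empty chain is vacuously $\prec_r$-standard.

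The main obstacle, and the only place real care is needed, is the interface between the lemmas: checking that the outer/inner split furnished by \autoref{the:patra} is compatible with the shape-based decomposition of \autoref{lem:inner_outer}, and in particular that $\prec_r$-standardness of the glued chain follows "at the seam". But this is precisely what \autoref{lem:divide}(ii) records — that since every position touched in the inner phase is $\prec_r$-dominated by (or incomparable to, hence not contradicting) the positions touched in the outer phase, no residual in the inner phase can violate standardness relative to a preceding outer step — so the gluing is immediate once the two halves are in place. The remaining subtlety, handling sums rather than single terms in the inner phase, is routine since all the relevant notions are defined summand-wise.
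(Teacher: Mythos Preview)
Your overall strategy---split via \autoref{the:patra}, standardize the outer half with \autoref{lem:stdouter}, decompose the inner half with \autoref{lem:inner_outer}, then glue using \autoref{lem:divide}(ii)---is exactly the paper's. The gap is in your induction measure.

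You propose lexicographic induction on $(|\rho|,|M|)$, mirroring the proof of \autoref{lem:stdouter}. But that proof works because \autoref{cor:reordering} \emph{preserves length}; \autoref{the:patra} makes no such claim, and in general postponing inner steps past outer ones can strictly lengthen a non-deterministic chain (an outer step through a $\bang{(\,)}$ duplicates the inner redex). So after the split $M\xredto[o*]M''\xredto[i*]M'$ you have no bound on the length of the inner segment, hence none on the $|\rho_i|$. The second coordinate does not help either: the sources $M'_i$ of the sub-reductions are proper subterms of the \emph{intermediate} term $M''$, not of $M$, and $M''$ may well be larger than $M$. Neither coordinate is guaranteed to decrease, so your recursive calls are not licensed.

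The paper fixes this by doing structural induction on the \emph{target} $M'$. The point is that the $N'_i$---the targets of the sub-reductions---are precisely the terms sitting under the $\bang{(\,)}$'s of $M'$, hence proper subterms of $M'$; the inductive hypothesis then applies to each $\rho_i:M'_i\xredto[nd*]N'_i$ regardless of how large $M''$ or how long the $\rho_i$ are. Swap your measure for this one and the rest of your argument goes through verbatim.

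A minor aside: your worry about $\Sum A'$ being a genuine sum is unnecessary. Non-deterministic reduction stays within simple terms (by definition $M\redto[nd]N$ picks a single addend), so the intermediate point is a term $M''$ and the outer-shape machinery applies directly.
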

\begin{proof}
  By structural induction on $M'$, the term where the reduction ends.
     First, applying \autoref{the:patra}, we get
  $\sigma:M\xredto[o*]M''$ and $\rho:M''\xredto[ndi*]M'$. Now we
  strive to obtain two standard chains $\sigma':M\xredto[o*]M''$ and
  $\rho':M''\xredto[ndi*]M'$ to obtain the chain $\sigma'\rho'$ which
  is standard by \autoref{lem:divide}. The existence of a standard
  $\sigma'$ is assured directly by \autoref{lem:stdouter}, so we
  need to concentrate on finding $\rho'$.
  By using \hyperref[lem:inner_outer]{\autoref*{lem:inner_outer}(\ref*{lem:inner_shape})}, we get
  $M''=\ell(M')\hole{N_1,\dots,N_k}$, $M'=\ell(M')\hole{N'_1,\dots,N'_k}$ and
  $\rho_i:N_i\xredto[nd*]N'_i$. As all $N'_i$ are structurally
  strictly smaller than $M'$, we can apply inductive hypothesis on
  each $\rho_i$ and get standard $\rho_i':N_i\xredto[nd*]N'_i$. Then
  using \hyperref[lem:outer_shape_std]{\autoref*{lem:inner_outer}(\ref*{lem:outer_shape_std})} we can glue back those
  reductions into the standard reduction $\rho':M''\xredto[i*]M'$.
\end{proof}
\begin{example}
Let $I=\lambda x.x$, $M_{1}=I \bag{\bang{ ((\lambda xy.x)\bag{\bang{I}}\bag{\bang{I}})}}$,
$M_{2}= I \bag{\bang{I}}$, and
let $M=\lambda x.x \bag{\bang{M_{1}},\bang{M_{2}}}$.
The following reduction is standard:
$M_{1}=I \bag{ \bang{ ((\lambda xy.x)\bag {\bang{I}}\bag{\bang{I}})   } } \redto[lm]
(\lambda xy.x)\bag{\bang{I}}\bag{\bang{I}} \redto[lm] (\lambda y.I)\bag{\bang{I}} \redto[lm] I.$
As $M_{2} \redto[lm] I$, the following is standard too
\begin{multline*}
\lambda x.x \bag{  \bang   {    (I \bag{\bang{ ((\lambda xy.x)\bag{\bang{I}}\bag{\bang{I}})}}     )},\bang{(I \bag{\bang{I}})}    } \redto[i]
\lambda x.x \bag{  \bang   {   ((\lambda xy.x)\bag{  \bang{I} }\bag{\bang{I} })   },\bang{(I \bag{\bang{I}})}    } \redto[i]
\\
\lambda x.x \bag{   \bang{((\lambda xy.x)\bag{\bang{I}}\bag{\bang{I}})   },
\bang{I}} \redto[i]
\lambda x.x \bag{\bang{((\lambda y.I)\bag{\bang{I}})},\bang{I}} \redto[i] 
\lambda x.x \bag{\bang{I},\bang{I}}.\end{multline*}
%
\end{example}

Let us notice that, as opposed to the weak form of standardization given in \autoref{the:patra}, the
$\prec_{r}$-standardization does not hold for parallel reduction. A counterexample is the following.
\begin{example}
Let $I_{0}$ and $I_{1}$ denote two occurrences of the identity $\lambda x.x$, and let
$M=I_0[I_1[x^!,y^!]]\redto[g] I_0[x]+I_0[y]\redto[g] x + I_0[y]$ by reducing the inner redex first.
But reducing the leftmost redex first we obtain
$M \redto[g] I_1[x^!,y^!] \redto[g] x+y$.
So the previous result cannot be obtained by a standard reduction. 
 \end{example}



\section{Solvability Machine}
The standardization result proved in the previous section allows us to design an abstract reduction machine characterizing the may-solvable terms in $\Lamr$.
A term of $\lambda$-calculus is solvable whenever there is a outer-context reducing it to the identity \cite{barendregt}. In the resource 
calculus, terms appear in formal sums,
so (at least) two different notions of solvability arise, related to a may and must operational semantics, respectively.
We will treat the former only.
%
\begin{definition}\label{def:maysolvability}
A simple term $M$ is \textdef{\maysol{}} whenever there is a linear applicative--context $C\hole\cdot$ such that $C\hole M\xredto[nd\ast]I$.
\end{definition}
May-solvability has been  completely characterized from both a syntactical and logical point of view in \cite{paganironchi10}. Syntactically, a term $M$ is \maysol\ if and only if it is may-outer normalizable.
An expression is an \textdef{outer normal form} ($onf$) if it has no redex but under the scope of a $\bang{(\,)}$, and
consequently
a term $M$ is \textdef{may-}\textdef{outer normalizable} if and only if 
$M \redto[nd\ast] N$, where $N$ is a $onf$ ($N$ is called a $monf$ of $M$). 
Logically, a particular intersection type assignment system has been defined,
typing all and only the \maysol\  terms.

We now will complete the job, characterizing may solvability from an operational point of view.
The following property is obvious.

\begin{property}\label{leftmost-empty}
$M$ is in $onf$ if and only if $\L(M)=\emptyset$.
\end{property}

The abstract reduction machine (called $ND$-machine)
proves statements of the shape $M \NDMachine N$, where $M,N$ are simple terms and
$N$ is a $onf$. 
The $ND$-machine uses an auxiliary machine,
the $B$-machine, performing the reductions on bags. The two machines are shown in Figure~\ref{fig: non-det-machine}.  

\begin{figure}
\centering
\subfigure[The ND reduction machine.]{\label{fig:NDmachine}
$\begin{gathered}
\infer[(\lambda)]{\lambda x.M \NDMachine \lambda x. M'}{M \NDMachine M'}\qquad
\infer[(end)]{M \NDMachine M}{M \mbox{ is in onf }}\qquad 
\infer[(head)]{xP_{1}...P_{m} \NDMachine xP'_{1}...P'_{m}}{P_i \BagMachine P'_{i} \quad (1 \leq i \leq m)}
\\
\infer[(0)]{(\lambda x.M)1 P_{1}...P_{m} \NDMachine M'}{M \{0/x\} P_{1}...P_{m} \NDMachine M'}\qquad
\infer[(\beta)]{(\lambda x.M)[N]\cdot P P_{1}...P_{m} \NDMachine M''}
{M\left<N/x\right> = M' + \Sum A \quad (\lambda x.M')P P_{1}...P_{m}\NDMachine M'' }
\\
\infer[(! \beta)]{(\lambda x.M)[N^{!}] \cdot P P_{1}...P_{m} \NDMachine M''}{M\left\{N+x/x\right\} = M' + \Sum A  \quad
  (\lambda x.M')P P_{1}...P_{m} \NDMachine M''}\\[5pt]
\end{gathered}$
}

\hrulefill

\subfigure[The auxiliary $B$ machine]{\label{fig:Bmachine}
$$
\infer[(1_{b})]{1 \BagMachine 1}{}
\qquad
\infer[(b)]{[M] \cdot P \BagMachine [N] \cdot P'}{M \NDMachine N \quad P \BagMachine P'}
\qquad
\infer[(!b)]{[M^{!}] \cdot P \BagMachine [M^{!}] \cdot P'}{P \BagMachine P'}
$$
}
\label{fig: non-det-machine}
\end{figure}


Some comments are in order. First of all, the machine performs the baby outer reduction, on a leftmost redex. 
Rules $(\lambda)$, $(end)$ and $(0)$ are self-explanatory. Rule $(head)$ implements the definition of 
$monf$; note that in this rule the order 
in which
the arguments are reduced does not matter.
Non-determinism appears in rules $(\beta)$ and $(! \beta)$. Indeed, if the result of the substitution is a sum, one of its addends  is randomly chosen. 
The auxiliary machine $B$ performs the reductions on bags. Note that the rule ($!b$) implements the notion of outer-reduction.
Remember that $0$ is not  a term, so it can be neither an input nor an output of the machine. So in rules $(0)$, $(\beta)$ 
and $(! \beta)$ the machine transition is undefined if the result of the substitution is $0$.
We will write $M\DNDMachine$ to denote that for any run of the machine on $M$ either it does not stop or it is undefined.

\begin{example}
$(\lambda zy.y)\bag{x} \DNDMachine$. In fact, trying to apply rule $\beta$, the machine needs to compute
$(\lambda y.y)\left<x/z\right>$, which is equal to $0$, so the premises of the rule are not satisfied.\\
$(\lambda x. x \bag{\bang{x}})(\lambda x. x \bag{\bang{x}})\DNDMachine$. In fact, the machine on this input does not stop. 
Notice that this term corresponds to an unsolvable term in the $\lambda$-calculus. \\
Let $F=\lambda xy.y$.
Then $(\lambda x. y \bag{x} \bag{x})\bag{F, I} \red[lm]$ reduces non deterministically to $y\bag{F}\bag{I} + y\bag{I} \bag{F}$. It is easy to check that there are two 
machine computations such that in one
$(\lambda x. y \bag{x} \bag{x})\bag{F, I}\NDMachine y\bag{F}\bag{I}$ while in the other
$(\lambda x. y \bag{x} \bag{x})\bag{F, I}\NDMachine y\bag{I} \bag{F}$.\\
$(\lambda x.y\bag{\bang{x}})\bag{\bang{I}, \bang{F}} \redto[g] y\bag{\bang{I}, \bang{F}}$, by reducing the leftmost redex. 
The unique machine computation for this input gives $(\lambda x.y\bag{\bang{x}})\bag{\bang{I}, \bang{F}}\NDMachine y\bag{\bang{I}, \bang{F}}$.
\end{example} 


\begin{theorem}\label{theo:soundCompl}
\begin{enumerate}[(i)]\
\item (Soundness) \label{theo:soundness}
If $M \NDMachine N$ then $M  \redto[lm*]  N$, and $N$ is a $\hnf$.
\item (Completeness)\label{theo:completeness}
Let $M$ be may-outer-normalizable and let $N$ be a  $\mhnf$ of $M$. 
There is a machine's computation proving $M \NDMachine N'$, 
where $N'$ is a $\mhnf$ of $M$ and $N' \xredto[\neg lm*] N$.
\end{enumerate}
\end{theorem}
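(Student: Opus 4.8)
The plan is to prove soundness and completeness by induction on the structure of the machine's derivation tree (for soundness) and on a suitable measure of the reduction (for completeness), exploiting the standardization result of \autoref{thm:standard-inner} together with \autoref{lem:divide} and the characterization of leftmost reductions.

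\textbf{Soundness.} I would argue by induction on the derivation of $M \NDMachine N$, with an auxiliary induction for the $B$-machine proving that $P \BagMachine P'$ implies $P \redto[o*] P'$ with $P'$ having no leftmost redex in any of its linear-resource components. The rules $(\lambda)$, $(end)$, $(head)$ are immediate by the induction hypothesis and by the context closure of $\redto[lm]$ and $\redto[o]$ (for $(head)$ one uses that the $B$-machine reductions happen inside bags, hence are outer but not leftmost at the level of $xP_1\cdots P_m$, and that they can be sequenced in any order as noted after the figure). For rules $(0)$, $(\beta)$, $(!\beta)$ the key point is that a single baby-step outer reduction of the leftmost redex $(\lambda x.M)[N]\cdot P\,P_1\cdots P_m$ yields $(\lambda x.M\linear Nx)P\,P_1\cdots P_m = \sum_j(\lambda x.M_j)P\,P_1\cdots P_m$ (using the notation of Figure~\ref{fig:notsum}), and the non-deterministic step picks one summand $M' = M_j$; then $M \redto[lm] (\lambda x.M')P\,P_1\cdots P_m \redto[lm*] N$ by the induction hypothesis, and the composite is a leftmost reduction. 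Finally, $N$ being in $\hnf$ follows because the machine only terminates via the $(end)$ rule (directly or through $(head)$, whose premises feed machine runs ending in $onf$s), and $onf$-ness is preserved upward by the constructors; alternatively invoke \autoref{leftmost-empty} at the leaves.

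\textbf{Completeness.} Suppose $M$ is may-outer-normalizable with $monf$ $N$, i.e.\ $M \redto[nd*] N$ and $N$ has no redex outside the scope of a $\bang{(\cdot)}$. By \autoref{the:patra} we may split this as $M \redto[o*] M'' \redto[i*] N$, and since inner reductions do not change the outer shape and $N$ is an $onf$, already $M''$ has no leftmost redex, so by \autoref{thm:standard-inner} (or directly \autoref{lem:stdouter}) there is a $\prec_r$-standard outer reduction $M \redto[o*] M''$; the leftmost part of this, via \autoref{cor:reordering}, gives a maximal leftmost chain $M \redto[lm*] N'$ with $N'$ having no leftmost redex, i.e.\ $N'$ an $onf$, and then $N' \xredto[\neg lm*] N$ (or more precisely $N' \redto[i*] N$, which is a special case of non-leftmost). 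It remains to show that this leftmost chain is faithfully simulated by the machine, i.e.\ that from $M \redto[lm*] N'$ with $N'$ an $onf$ one can build a machine run $M \NDMachine N'$ (or a machine run reaching some $onf$ $N''$ with $N'' \xredto[\neg lm*] N$); this I would do by induction on the length of the leftmost chain, doing a case analysis on the head shape of $M$: if $M = \lambda \vec y. x P_1 \cdots P_m$ the leftmost steps all take place inside the bags and are handled by the $(head)$/$B$-machine rules plus the induction hypothesis; if $M = \lambda\vec y.(\lambda x.M_0)P\,P_1\cdots P_m$, the first leftmost step is exactly a $(0)$, $(\beta)$ or $(!\beta)$ transition, and the machine's chosen summand can be taken to match the one chosen by the standard reduction.

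\textbf{Main obstacle.} The delicate point in completeness is matching the \emph{non-deterministic choice}: the standard reduction fixes one addend $M'$ out of $M\linear Nx = M' + \Sum A$ at each head step, and I must ensure the machine rule $(\beta)$ can select the \emph{same} addend, i.e.\ that the decomposition "$M\linear Nx = M' + \Sum A$" in the premise is read non-deterministically over all choices of $M'$. Once this is granted the simulation is a bookkeeping exercise, but one must be careful that discarding the rest of the sum $\Sum A$ is exactly what the $\redto[nd]$ step does (Definition~\ref{def:giantbaby}(ii)), and that the "some $M''$ with $M'' \xredto[\neg lm*] N$" in the statement absorbs the harmless mismatch between a \emph{maximal} leftmost chain and an arbitrary $monf$ — two distinct $monf$s of $M$ differing only under $\bang{(\cdot)}$ are related by inner, hence non-leftmost, reduction. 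A secondary subtlety is the $(0)$ rule when $M\expo{0}x$ or a linear substitution yields $0$: the machine transition is undefined, which is sound because in that case the corresponding $\redto[nd]$ step does not exist either, so such a branch was never part of a successful reduction to an $onf$.
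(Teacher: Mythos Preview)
Your plan coincides with the paper's (brief) sketch: soundness by mutual induction on the rules of the two machines, completeness as a consequence of $\prec_r$-standardization. Two local slips are worth flagging.

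In the $(head)$ case of soundness you write that the $B$-machine reductions are ``outer but not leftmost at the level of $xP_1\cdots P_m$''. This is the opposite of what you need. By the inductive hypothesis each call $M\NDMachine N$ inside the $B$-machine yields $M\redto[lm*]N$, and since $\L([M]\cdot P)=\L(M)\cup\L(P)$ every such step is also leftmost in the enclosing bag; sequencing the bags left to right (so that the already-processed $P'_j$ satisfy $\L(P'_j)=\emptyset$ and hence $\L(xP'_1\cdots P'_{j}P_{j+1}\cdots P_m)=\L(P_{j+1})$) then makes every step leftmost in $xP_1\cdots P_m$. Without this correction the conclusion $M\redto[lm*]N$ does not follow from your argument.

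In completeness, \autoref{cor:reordering} alone does not guarantee that the intermediate point $N'$ is in $\hnf$: it only splits the outer chain as $M\redto[lm*]N'\xredto[\neg lm*]M''$. You need the easy observation that a non-leftmost outer step cannot destroy a leftmost redex (the fired redex is either disjoint from, or properly contained in, any leftmost one), so from $M''$ being in $\hnf$ you deduce $\L(N')=\emptyset$ and conclude by \autoref{leftmost-empty}. Finally, the passage from $N'$ to the given $N$ goes through the inner tail $M''\redto[i*]N$ as well; since $N'$ is already in $\hnf$ there are no further outer steps available, so the relation $N'\xredto[\neg lm*]N$ in the statement must be read as ``non-leftmost $\redto[nd]$'' (including inner steps), which is indeed what your parenthetical remark acknowledges.
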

\begin{proof}[Proof (sketch)]
Point~\eqref{theo:soundness} is proved by mutual induction on the rules of the two machines.
Point~\eqref{theo:completeness} is an immediate consequence of the $\prec_{r}$-standardization property.
\end{proof}

%
%

\paragraph*{Acknowledgements.}
We would like to thank Michele Pagani for his interesting and useful suggestions.

\bibliographystyle{eptcs}
\newcommand{\online}[1]{Available at \url{#1}}

\end{document}